\newcolumntype{L}[1]{>{\raggedright\let\newline\\\arraybackslash\hspace{0pt}}m{#1}}
\newcolumntype{C}[1]{>{\centering\let\newline\\\arraybackslash\hspace{0pt}}m{#1}}
\newcolumntype{R}[1]{>{\raggedleft\let\newline\\\arraybackslash\hspace{0pt}}m{#1}}
\setlist[itemize]{noitemsep}
\newenvironment{rcases}
  {\left.\begin{aligned}}
  {\end{aligned}\right\rbrace}
\renewcommand{\phi}{\varphi}
\newcommand{\NN}{\mathbb{N}}
\newcommand{\intinter}[2]{\llbracket #1, #2 \rrbracket}
\DeclareRobustCommand\Supp{\mathop{\operator@font Supp}\nolimits}
\newcommand{\antiport}[2]{( #1, \text{out}\ | \ #2, \text{in} )}
\newcommand{\antiportbr}[2]{( #1, \text{out}\ | \\ & \qquad \qquad \ #2, \text{in} )}
\newcommand{\sympout}[1]{( #1, \text{out} )}
\newcommand{\sympin}[1]{( #1, \text{in} )}
\newcommand{\CSC}{\mathbf{CSC}}
\newcommand{\CSCk}[1]{\mathbf{CSC}(#1)}
\newcommand{\TSC}{\mathbf{TSC}}
\newcommand{\PMC}{\mathbf{PMC}}
\newcommand{\PMCCSC}{\PMC_{\CSC}}
\newcommand{\PMCCSCk}[1]{\PMC_{\CSCk{#1}}}
\newcommand{\PMCTSC}{\PMC_{\TSC}}
\renewcommand{\P}{\mathbf{P}}
\newcommand{\NP}{\mathbf{NP}}
\newcommand{\PsharpP}{\mathbf{P}^\mathbf{\#P}}
\newcommand{\PSPACE}{\mathbf{PSPACE}}
\newcommand{\SAT}{\mathbf{SAT}}
\newcommand{\MIDSAT}{\mathbf{MIDSAT}}
\newcommand{\DeltaSAT}{\Delta_{SAT}}
\newcommand{\DeltaACT}{\Delta_{ACT}}
\newcommand{\DeltaSEA}{\Delta_{SEA}}
\newcommand{\DeltaOUT}{\Delta_{OUT}}
\newcommand{\DeltaPrep}{\Delta_{Prep}}
\newcommand{\DeltaComp}{\Delta_{Comp}}
\newcommand{\DeltaDete}{\Delta_{Dete}}
\newcommand{\DeltaLoop}{\Delta_{Loop}}
\newcommand{\IPrep}{I_{Prep}}
\newcommand{\IComp}{I_{Comp}}
\newcommand{\IDete}{I_{Dete}}
\newcommand{\IOUT}{I_{OUT}}
\theoremstyle{definition}
\newtheorem{definition}{Definition}
\theoremstyle{plain}
\newtheorem{theorem}{Theorem}
\newtheorem{proposition}{Proposition}
\newtheorem{lemma}{Lemma}
\newtheorem{corollary}{Corollary}
\theoremstyle{remark}
\newtheorem{remark}{Remark}
\newtheorem{example}{Example}
\begin{document}

\title{\texorpdfstring{Symport/Antiport P Systems with\\ Membrane Separation Characterize $\PsharpP$}{Symport/Antiport P Systems with Membrane Separation Characterize P\textasciicircum(\#P)}}

\author[1]{Vivien Ducros}
\affil[1]{ENS Paris-Saclay, Université Paris-Saclay, Gif-sur-Yvette, France}

\author[2]{Claudio Zandron}
\affil[2]{Dipartimento di Informatica, Sistemistica e Comunicazione, Universit\`a degli Studi di Milano-Bicocca, Milano, Italy}

\maketitle

\begin{abstract}
Membrane systems represent a computational model that operates in a distributed and parallel manner, inspired by the behavior of biological cells.
These systems feature objects that transform within a nested membrane structure.
This research concentrates on a specific type of these systems, based on cellular symport/antiport communication of chemicals.

Results in the literature show that systems of this type that also allow cell division can solve $\PSPACE$ problems.
In our study, we investigate systems that use membrane separation instead of cell division, for which only limited results are available.
Notably, it has been shown that any problem solvable by such systems in polynomial time falls within the complexity class $\PsharpP$.

By implementing a system solving $\MIDSAT$, a $\PsharpP$-complete problem, we demonstrate that the reverse inclusion is true as well, thus providing an exact characterization of the problem class solvable by P systems with symport/antiport and membrane separation.

Moreover, our implementation uses rules of length at most three.
With this limit, systems were known to be able to solve $\NP$-complete problems, whereas limiting the rules by length two, they characterize $\P$.
\end{abstract}

\section{Introduction}

Membrane systems (or P~systems) were introduced in \cite{Paun00}.
They are a computational model inspired by the operations made by biological cells on different types of chemicals.
The structure of these systems comprises a hierarchy of nested membranes, each defining distinct regions.
A system is enclosed in an outermost membrane called the skin, isolating it from the external environment.
Within these regions, objects undergo transformations based on specific evolution rules, enabling their movement and communication across membranes between different regions.

An interesting and deeply investigated feature of such systems was proposed shortly after in \cite{PaunNP01}, where the concept of \emph{active membranes} was introduced: membranes can be \emph{divided} by evolutionary rules, duplicating their contents in both obtained copies, thus allowing for the generation of an exponential amount of resources within a polynomial time frame.
An alternative approach, later considered, is \emph{separation} of membranes \cite{Alhazov2004}, where membranes are still duplicated but the contents of the original membrane are distributed between the two resulting membranes rather than being duplicated.

Since then, many studies have investigated the complexity classes defined by different types of P systems with active membranes, employing various features and constraints.
These studies aim to understand how specific properties influence the development of time-efficient systems that can solve computationally difficult problems in polynomial time while using exponential space, in contrast to less efficient systems.
In this paper, we continue this line of research by concentrating on a variant of membrane systems known as symport/antiport P systems, introduced in \cite{PP02}.
In this model, objects can be communicated between two regions in one of two ways: either by moving multiple objects simultaneously from the same region to an adjacent one (symport) or by exchanging two or more objects between two adjacent regions (antiport).

In \cite{Song15}, authors proved that symport/antiport P systems with membrane division and rules of length (the number of objects involved in the rule) not exceeding 3 can uniformly solve the well-known $\PSPACE$-complete problem $\mathbf{QSAT}$, in polynomial time and exponential space.
In \cite{Sosik19}, the authors conjectured that the reverse inclusion also holds, which would provide a characterization of $\PSPACE$ in terms of these systems.

Regarding symport/antiport P systems with membrane separation, only partial results have been obtained so far.
For example, they are known to solve the $\SAT$ problem \cite{VCAl15}, indicating that $\mathbf{NP}$ is a lower bound for these systems.
The influence of the environment has been studied in \cite{OMAl20}, showing that while it does not affect the computational power of symport/antiport P systems with membrane division, removing the environment in the case of membrane separation reduces the class of solvable problems to $\mathbf{P}$.
When the length of the rules is restricted to 1 (resp. 2), symport/antiport P systems with membrane division (resp. separation) are inefficient and can only solve problems in $\mathbf{P}$ \cite{MRAl17} (resp. \cite{MRAl15}).

In all these instances, the membranes are arranged in a nested structure.
However, another variant, known as \emph{tissue P systems}, arranges the membranes as a directed graph.
When fission (division or separation) rules are allowed, these systems characterize the class $\PsharpP$ of problems solved by polynomial time Turing Machine using $\mathbf{\#P}$ oracles \cite{LepoAl17}.
By exploiting this last result and the relations between cell-like and tissue-like P systems making use of membrane separation, we first show in this work that $\PsharpP$ represents an upper bound for P systems with symport/antiport and membrane separation.

Moreover, we prove that the opposite inclusion also holds by solving the $\PsharpP$-complete problem $\MIDSAT$, thus obtaining a precise characterization of the class of problems solved by such kind of P systems.
The $\MIDSAT$ problem takes a boolean formula $\phi(x_1,\dots,x_n)$ as input, and decides whether the variable $x_n$ is assigned to $1$ in the lexicographically middle satisfying assignment of $\phi$.
The system solving $\MIDSAT$ we designed is based on the one solving $\SAT$ from \cite{VCAl15}.
We provide an implementation of the system in the \texttt{MeCoSim} application, a simulator of P systems \cite{PLingua}, with the specific model for symport/antiport systems \cite{PLinguaSA}.

The rest of the paper is organized as follows: section \ref{section:definitions} is dedicated to the definitions of symport/antiport P systems and their complexity classes.
In section~\ref{section:results}, we give some results towards the characterization (Corollary \ref{corollary:characterization}).
The system solving $\MIDSAT$ is described in section \ref{section:system} and we prove it is correct.
Finally, we add a few comments about our implementation in \texttt{MeCoSim} in section~\ref{section:implementation}.

\section{Definitions}
\label{section:definitions}

We give in this section the usual definitions of P systems and complexity classes of problems solved by P systems.
For more details, our reader can refer to the introduction and to the handbook of membrane computing \cite{Paun06}, \cite{HBMC10}.
The notions of complexity theory for membrane systems are treated in \cite{PJ10}.

The set of natural integers is written $\NN$ and includes $0$.
$\intinter{i}{j}$ denotes the set of all integers from $i$ to $j$.

A \emph{multiset} $m$ over a set $A$ is a pair $(f, A)$ where $f : A \longrightarrow \NN$.
For $a \in A$, $f(a)$ is the number of occurrences of $a$ in $m$.
Its \emph{support} is $\Supp(m) = \{ a \in A : f(a) > 0 \}$.
If the support is finite, its cardinal $|m| = \sum_{a \in A} f(a)$ is finite as well.
We write $a \in m$ when $a \in \Supp(m)$.
The sum $m_0 + m_1$ of two multisets $m_0 = (f_0,A)$ and $m_1=(f_1,A)$ is the multiset $m = (f,A)$ where $f(a) = f_0(a) + f_1(a)$ for all $a \in A$.
$m_0$ is included in $m_1$, written $m_0 \sqsubseteq m_1$, when $f_0(a) \leqslant f_1(a)$ for all $a \in A$.

\subsection{Symport/Antiport P Systems}

\begin{definition}
    A \emph{membrane structure} is a rooted tree in which nodes are labeled by integers from $1$ to $q$, where $q > 0$ is the size of the membrane structure.
    The nodes are called membranes.
    The depth $d(h)$ of a membrane $h$ is its distance from the root.
    The root is called the \emph{skin}.
    A leaf is called an \emph{elementary} membrane.
    The parent of a membrane $h$ is denoted by $p(h)$.
\end{definition}

\begin{definition}
    \label{definition:symportantiportpsystem}
    A \emph{symport/antiport P system with membrane separation rules} is a tuple
    $$\Pi = (\Gamma, (\Gamma_0, \Gamma_1), \Sigma, \mathcal{E}, \mu, \mathcal{M}_1, \dots, \mathcal{M}_q, \mathcal{R}_1, \dots, \mathcal{R}_q, h_{in}, h_{out})$$
    where:
    \begin{itemize}
    \item $\Gamma$ is a finite alphabet of objects with a partition $\{\Gamma_0, \Gamma_1\}$: $\Gamma_0 \uplus \Gamma_1 = \Gamma$,
    \item $\Sigma \subset \Gamma$ is the input alphabet,
    \item $\mathcal{E} \subseteq \Gamma \setminus \Sigma$ is the set of objects present in the environment, each one available in arbitrary number of copies,
    \item $\mu$ is a membrane structure of size $q$,
    \item $\mathcal{M}_h$ is the finite multiset of objects of $\Gamma \setminus \Sigma$ initially located in membrane $h \in \intinter{1}{q}$,
    \item $\mathcal{R}_h$ is the finite set of evolution rules of the membrane $h \in \intinter{1}{q}$ of type:
    \begin{itemize}
        \item $\sympout{u}$ or $\sympin{u}$ called the symport rules,
        \item $\antiport{u}{v}$ called the antiport rules,
        \item $[a]_h \rightarrow [\Gamma_0]_h[\Gamma_1]_h$, if $h$ is an elementary membrane distinct from the skin and the output membrane $h_{out}$, called the separation rules,
    \end{itemize}
    where $u$ and $v$ are non-empty finite multisets of $\Gamma$ and $a \in \Gamma$,
    \item $h_{in} \in \intinter{1}{q}$ is the index of the input membrane,
    \item $h_{out} \in \intinter{0}{q}$ is the index of the output membrane.
    \end{itemize}
\end{definition}

\paragraph{How Does a Symport/Antiport P System Compute?}
Let $\Pi$ be a symport/ antiport P system with membrane separation rules defined as above.
As P systems are models of computation, they evolve step by step from an initial configuration.
Our definition allows to give a multiset $m$ over the input alphabet $\Sigma$ as an input to the system in the membrane $h_{in}$.
We describe here the evolution of $\Pi(m)$.
Note that the computation is non-deterministic, but we will later restrict ourselves to \emph{confluent} systems, where all possible computations give the same result (see Remark \ref{remark:confluence}).

Each membrane $h \in \mu$ of the system delimits a \emph{region} in which objects of $\Gamma$ lie.
A region is the area between $h$ and its children (membranes $h'$ such that $p(h') = h$).
Objects go through membrane $h$, when communicated between region $h$ and $p(h)$, according to symport and antiport rules in $\mathcal{R}_h$.
Elementary membranes (the one having no children) can split according to the separation rules.
We distinguish two regions: the \emph{environment} is the region outside the system, containing in arbitrary many occurrences the objects of $\mathcal{E}$.
The \emph{skin} is the limit between the system and the environment, its index is the root of $\mu$.
It cannot separate, however, it is the only membrane able to communicate with the environment, so we consider the environment as the parent of the skin.

The \emph{configuration} $\mathcal{C}_n = (\mu_n, (m_n(h))_{h\in \mu_n})$ of the system after step $n \in \NN$ is a pair where $\mu_n$ is the membrane structure of the system and $(m_n(h))_{h\in \mu_n}$ is the family of all multisets corresponding to the objects lying in each region after step $n$.
The \emph{initial configuration} is $\mathcal{C}_0 = (\mu_0, (m_0(h))_{h\in \mu_0})$ where $\mu_0 = \mu$, $m_0(h) = \mathcal{M}_{h}$ for all $h \in \intinter{1}{q}$, $h \neq h_{in}$, and $m_0(h_{in}) = \mathcal{M}_{h_{in}} + m$.

The symport/antiport P system evolves according to the \emph{maximal parallel mode}, that is: all rules that can be applied are applied in parallel as long as they can.
Objects can be used by only one rule at a given step, and the objects communicated can't be used again in the same step.
If an object can be used by more than one rule, the applied rule is chosen non-deterministically.
If a separation rule is applied, it is the only one applied to this membrane.

During step $n>0$, the symport rule $\sympout{u} \in \mathcal{R}_h$ can be used if the objects of $u$ are in region $h$: $u \sqsubseteq m_{n-1}(h)$.
It sends the objects of $u$ outside region $h$ (through membrane $h$ to the region $p(h)$).
The symport rule $\sympin{u} \in \mathcal{R}_h$ can be used if $u \sqsubseteq m_{n-1}(p(h))$.
It sends $u$ from region $p(h)$ inside region $h$.
The antiport rule $\antiport{u}{v} \in \mathcal{R}_h$ can be used if $u \sqsubseteq m_{n-1}(h)$ and $v \sqsubseteq m_{n-1}(p(h))$.
Then $u$ is sent outside $h$ whereas $v$ is sent inside $h$.

The length of a rule is $|u|$ for symport rules and $|u| + |v|$ for antiport rules.

The membrane separation rule $[a]_h \rightarrow [\Gamma_0]_h[\Gamma_1]_h \in \mathcal{R}_h$ can be used when the object $a$ is in $h$.
The object $a$ is consumed, and the membrane $h$ is split into two membranes, the first one containing the objects of $\Gamma_1$ in $h$ and the second the objects of $\Gamma_2$ in $h$.
The initial membrane $h$ is deleted and the new membranes keep the index $h$ so the rules of $\mathcal{R}_h$ can be applied to both at the next step.
By definition, only elementary membranes different from the output membrane can be separated.

Starting with the initial configuration $\mathcal{C}_0$ and applying the rules in the maximal parallel order, we obtain a sequence $\mathcal{C}_0, \mathcal{C}_1, \dots$ of configurations called computation, representing the state of the system after each step.
When no rules can be applied to a configuration, the computation halts.
In this case, the result of the computation is the multiset of objects in the membrane $h_{out}$ of the final configuration.
If a computation never halts, then it produces no output.

\subsection{Decision Problems and Complexity Classes}

\begin{definition}
A \emph{decision problem} $X$ is a pair $(I_X, \theta_X)$ where $I_X$ is the set of instances and $\theta_X : I_X \longrightarrow \{0, 1\}$.
An instance $x \in I_X$ is said to be accepted by the problem if $\theta_X(x) = 1$, else it is rejected.
\end{definition}

\begin{definition}
    \label{definition:recognizer}
    A \emph{recognizer symport/antiport P system with membrane separation rules} is a symport/antiport P system with membrane separation $\Pi$ where:
    \begin{itemize}
        \item $\mathbf{yes}$, $\mathbf{no} \in \Gamma \setminus \Sigma$ are the answer objects,
        \item they are initially present in the system: $\exists h, h' \in \intinter{1}{q}, \mathbf{yes} \in m_h, \mathbf{no} \in m_{h'}$,
        \item the output membrane is the environment: $h_{out} = 0$,
    \end{itemize}
    and such that:
    \begin{itemize}
        \item every computation halts,
        \item for all computations of $\Pi$ on the same input, either $\mathbf{yes}$ (the computation is said to be accepting) or $\mathbf{no}$ (the computation is said to be rejecting) must be sent to the environment \textbf{only at the last step}.
    \end{itemize}
    We denote by $\CSC$ the class of all recognizer symport/antiport P systems with separation rules.
    When all rules have length at most $k \in \NN$, the recognizer P system is in $\CSCk{k}$.
\end{definition}

\begin{definition}
    Let $\mathcal{D}$ be a class of recognizer P systems.
    A family $\mathbf{\Pi} = (\Pi_n)_{n \in \NN}$ of P systems is \emph{$\mathcal{D}$-consistent} if $\Pi_n \in \mathcal{D}$ for all $n \in \NN$.
    Moreover, it is said to be \emph{polynomially uniform} if there exists a deterministic Turing Machine that computes, from the unary representation of $n \in \NN$, the P system $\Pi_n$ in polynomial time of $n$.
\end{definition}

\begin{definition}
    A decision problem $X = (I_X, \theta_X)$ is \emph{decidable in a uniform way} by a polynomially uniform family $\mathbf{\Pi}$ of recognizer P systems if there exist two polynomially computable functions $s$ (size) and $cod$ (encoding) such that for all $x \in I_X$:
    \begin{itemize}
        \item $s(x) \in \NN$,
        \item $cod(x)$ is a multiset over $\Sigma_{s(x)}$, the input alphabet of $\Pi_{s(x)} \in \mathbf{\Pi}$,
        \item if $\theta_X(x) = 1$ then all computations of $\Pi_{s(x)}(cod(x))$ are accepting,
        \item if there exists one accepting computation of $\Pi_{s(x)}(cod(x))$, then $\theta_X(x) = 1$.
    \end{itemize}
\end{definition}

\begin{remark}
    \label{remark:confluence}
    The last two properties are called \emph{soundness} and \emph{completeness}.
    They imply \emph{confluence}, which is that the computations of $\Pi_{s(x)}(cod(x))$, $x\in I_X$ are all accepting or all rejecting.
    Such a P system is said to be confluent.
\end{remark}

\begin{definition}
    Let $\mathcal{D}$ be a class of recognizer P systems.
    A decision problem $X$ is in $\mathbf{PMC}_\mathcal{D}$ if there exists a polynomially uniform $\mathcal{D}$-consistent family $\mathbf{\Pi}$ deciding $X$ in a uniform way, such that for all $x \in I_X$, each computation of $\Pi_{s(x)}(cod(x))$ halts in a polynomial number of steps relatively to $s(x)$.
\end{definition}

\begin{proposition}
    \label{proposition:pmc}
    Let $\mathcal{D}$, $\mathcal{D'}$ be two classes of recognizer P systems such that $\mathcal{D} \subseteq \mathcal{D'}$.
    The following statement holds by definition: $\PMC_\mathcal{D} \subseteq \PMC_\mathcal{D'}$.
\end{proposition}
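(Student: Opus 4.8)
The plan is to unfold the definition of $\PMC_\mathcal{D}$ directly, since the inclusion is purely a matter of recognizing that any family witnessing membership in the left-hand class already witnesses membership in the right-hand class. First I would fix an arbitrary decision problem $X \in \PMC_\mathcal{D}$ and extract the witnessing data guaranteed by the definition: a family $\mathbf{\Pi} = (\Pi_n)_{n \in \NN}$ together with the two polynomially computable functions $s$ and $cod$, so that $\mathbf{\Pi}$ is polynomially uniform, $\mathcal{D}$-consistent, decides $X$ in a uniform way, and halts in a polynomial number of steps on every input of the form $cod(x)$.

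The single substantive observation is that $\mathcal{D}$-consistency upgrades to $\mathcal{D'}$-consistency under the hypothesis $\mathcal{D} \subseteq \mathcal{D'}$: for each $n \in \NN$ we have $\Pi_n \in \mathcal{D} \subseteq \mathcal{D'}$, hence $\Pi_n \in \mathcal{D'}$, so the very same family $\mathbf{\Pi}$ is $\mathcal{D'}$-consistent. All remaining requirements — the polynomial uniformity of $\mathbf{\Pi}$, the uniform decidability of $X$ via $s$ and $cod$, and the polynomial bound on the number of computation steps — are conditions on $\mathbf{\Pi}$, $s$, and $cod$ alone and make no reference to the ambient class; they therefore transfer verbatim. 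Consequently the same witness shows $X \in \PMC_\mathcal{D'}$, and since $X$ was arbitrary I conclude $\PMC_\mathcal{D} \subseteq \PMC_\mathcal{D'}$.

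I do not expect any genuine obstacle here, and indeed the statement is flagged as holding ``by definition.'' The only step that uses the hypothesis is the one-line promotion of consistency from $\mathcal{D}$ to $\mathcal{D'}$; no estimates, constructions, or simulations are involved. The proof is complete once this promotion is recorded.
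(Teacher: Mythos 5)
Your proposal is correct and matches the paper's (implicit) argument: the paper offers no separate proof, declaring the inclusion to hold by definition, and your unfolding — promoting $\mathcal{D}$-consistency to $\mathcal{D'}$-consistency via $\Pi_n \in \mathcal{D} \subseteq \mathcal{D'}$ while noting that uniformity, the encoding functions, and the polynomial step bound are unaffected — is precisely the definitional check being invoked. Nothing is missing.
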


\section{\texorpdfstring{Characterization of $\PsharpP$}{Characterization of P\textasciicircum(\#P)}}
\label{section:results}

The main contribution of this article is the proof that the class $\PsharpP$ characterizes exactly the problems solved by symport/antiport P systems with membrane separation rules, see Corollary \ref{corollary:characterization}.

This section provides all the tools and the results needed in the proof of this characterization.
First, we recall a previous result from \cite{LepoAl17} characterizing $\PsharpP$ for tissue-like symport/antiport P systems, another type of systems where membranes are organized in a graph structure (instead, we use a more specific definition with a hierarchy structure between the membranes).
Then we state our principal theorem: we can solve the $\PsharpP$-complete problem $\MIDSAT$ with a symport/antiport P system with membrane separation using rules of length not exceeding 3 (Theorem \ref{theorem:midsat_solved}).

Combining these results, we prove the characterization (Corollary \ref{corollary:characterization}).

\begin{definition}
    A \emph{tissue P system} is a tuple
    $$\Pi = (\Gamma, (\Gamma_0, \Gamma_1), \Sigma, \mathcal{E}, \mathcal{M}_1, \dots, \mathcal{M}_q, \mathcal{R}, h_{in}, h_{out})$$
    The difference with the cell-like P systems we defined (Definition \ref{definition:symportantiportpsystem}) is that we do not have a membrane structure anymore.
    Every membrane can separate (except the output membrane) and can communicate objects by symport/antiport rules to any other membrane, including the environment.
    The rules are all grouped in the rule set $\mathcal{R}$.
    Symport rules are written $[u]_h \leftrightarrow [\ ]_{h'}$, and antiport rules are written $[u]_h \leftrightarrow [v]_{h'}$ ($u$, $v$ being non-empty finite multisets over $\Gamma$; $h$, $h' \in \intinter{0}{q}$ being the communicating membranes, where $0$ identifies the environment).

    To define $\TSC$, the class of \emph{recognizer tissue symport/antiport P systems with membrane separation}, we ask for the same properties as for the cell-like systems ($\mathbf{yes}$ and $\mathbf{no}$ objects, $h_{out} = 0$, computations halt when releasing $\mathbf{yes}$ or $\mathbf{no}$ in the environment, see definition \ref{definition:recognizer}).
\end{definition}

\begin{proposition}
    \label{proposition:cell_included_tissue}
    $\PMCCSC \subseteq \PMCTSC$
\end{proposition}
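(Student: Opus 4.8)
The plan is to show that any cell-like recognizer symport/antiport P system with separation is, up to a purely syntactic rewriting of its rules, a tissue system of the same kind, and that this rewriting can be carried out in polynomial time while preserving every computation step. Concretely, given a problem $X \in \PMCCSC$ witnessed by a polynomially uniform $\CSC$-consistent family $\mathbf{\Pi} = (\Pi_n)_{n \in \NN}$ that decides $X$ in a uniform way with a polynomial step bound, I would build a family $\mathbf{\Pi}' = (\Pi'_n)_{n \in \NN}$ of tissue systems and argue that it decides $X$ with the same encoding and size functions, the same step bound, and membership in $\TSC$, which yields $X \in \PMCTSC$.

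For each $n$, the tissue system $\Pi'_n$ keeps the alphabet $\Gamma$, the partition $(\Gamma_0, \Gamma_1)$, the input alphabet $\Sigma$, the environment $\mathcal{E}$, the initial multisets $\mathcal{M}_1, \dots, \mathcal{M}_q$, and the indices $h_{in}$ and $h_{out} = 0$ of $\Pi_n$; it simply discards the tree $\mu$. The rule set $\mathcal{R}$ is obtained by rewriting, for every membrane $h$, each rule of $\mathcal{R}_h$ into a rule between $h$ and its parent $p(h)$ (with the environment, of index $0$, playing the role of the parent of the skin), following
$$\sympout{u} \mapsto [u]_h \leftrightarrow [\ ]_{p(h)}, \quad \sympin{u} \mapsto [u]_{p(h)} \leftrightarrow [\ ]_h, \quad \antiport{u}{v} \mapsto [u]_h \leftrightarrow [v]_{p(h)},$$
while the separation rules $[a]_h \rightarrow [\Gamma_0]_h[\Gamma_1]_h$ are carried over unchanged. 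Because $\mathbf{\Pi}$ is polynomially uniform, the deterministic machine producing $\Pi_n$ can be post-composed with this constant-cost-per-rule rewriting, so $\mathbf{\Pi}'$ is polynomially uniform as well; the answer objects $\mathbf{yes}$, $\mathbf{no}$, the condition $h_{out} = 0$, and the requirement that $\mathbf{yes}$ or $\mathbf{no}$ be released only at the last step are inherited verbatim, so $\Pi'_n \in \TSC$.

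The correctness argument rests on a step-by-step configuration correspondence: I would prove by induction on $n$ that, representing each cell-like configuration by the family of multisets lying in its labeled regions, the reachable configurations of $\Pi_n(cod(x))$ and of $\Pi'_n(cod(x))$ coincide, and that a rule is applicable in one exactly when its image is applicable in the other. Applicability of each translated symport/antiport rule depends only on the contents of $h$ and of $p(h)$, which is precisely the information the cell-like rule consults, so the sets of applicable rules match; maximal parallelism and the single-use-per-object constraint are identical in the two models, hence the (non-deterministic) families of one-step successors are in bijection. The key point to verify carefully is that restricting the tissue rules to the tree edges introduces no new behavior even after separations, when several elementary membranes share a label $h$: each such copy still interacts only with the unique parent $p(h)$, exactly as in $\Pi_n$, so no spurious inter-membrane communication becomes available. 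Consequently $\Pi'_n(cod(x))$ has an accepting computation if and only if $\Pi_n(cod(x))$ does, it halts in the same number of steps, and the polynomial step bound transfers, giving $X \in \PMCTSC$ and hence $\PMCCSC \subseteq \PMCTSC$. The only genuine subtlety, and thus the main obstacle, is this faithfulness of the maximal-parallel dynamics under the label collisions created by separation; everything else is bookkeeping.
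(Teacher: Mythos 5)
Your proposal is correct and takes essentially the same route as the paper: the identical syntactic translation $\sympout{u} \mapsto [u]_h \leftrightarrow [\ ]_{p(h)}$, $\sympin{u} \mapsto [u]_{p(h)} \leftrightarrow [\ ]_h$, $\antiport{u}{v} \mapsto [u]_h \leftrightarrow [v]_{p(h)}$ (with separation rules carried over), under which a cell-like recognizer system is regarded as a tissue recognizer system. The paper compresses all the checks you carry out by hand (uniformity, fidelity of the maximal-parallel dynamics after separation, step bounds) into the statement $\CSC \subseteq \TSC$ followed by an appeal to Proposition \ref{proposition:pmc}, so your explicit simulation argument is simply the unpacked version of the same proof.
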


\begin{proof}
    A recognizer (cell-like) P system is, in fact, a recognizer tissue P system.
    We just need to translate symport and antiport rules in $\mathcal{R}_h$ as follows: $\sympout{u}$, $\sympin{u}$, and $\antiport{u}{v}$ become $[u]_h \leftrightarrow [\ ]_{p(h)}$, $[u]_{p(h)} \leftrightarrow [\ ]_h$, and $[u]_h \leftrightarrow [v]_{p(h)}$.

    Then $\CSC \subseteq \TSC$ and $\PMCCSC \subseteq \PMCTSC$ by Proposition \ref{proposition:pmc}.
\end{proof}

\begin{proposition}[Leporati et Al., \cite{LepoAl17}]
    \label{proposition:characterization_tissue}
    $\PMCTSC = \PsharpP$
\end{proposition}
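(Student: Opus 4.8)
The plan is to establish the two inclusions $\PMCTSC \subseteq \PsharpP$ and $\PsharpP \subseteq \PMCTSC$ separately, recalling that $\PsharpP = \mathbf{P}^{\mathbf{\#P}}$ is the class of problems solved by a deterministic polynomial-time Turing machine equipped with a $\mathbf{\#P}$ oracle. As this is a result of Leporati et al.\ \cite{LepoAl17}, I only sketch each direction at the level of its main idea.

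For the upper bound $\PMCTSC \subseteq \PsharpP$, I would simulate a confluent recognizer tissue system $\Pi_{s(x)}(cod(x))$ by a deterministic polynomial-time machine that never materializes a full configuration — which may contain exponentially many membranes created by separation — but instead tracks only the relevant \emph{multiplicities}, written in binary so that exponential counts cost polynomial space. The guiding observation is that separation merely redistributes the objects already present (it never duplicates them), and that two membranes carrying the same label and the same multiset of objects are indistinguishable and evolve identically; hence the global state is captured by how many membranes occupy each \emph{type}. The crucial point is that the number of membranes reaching a prescribed type can be expressed as the number of admissible ``histories'' (sequences of rule applications along computation branches) leading to it, and this is exactly a $\mathbf{\#P}$ quantity: a nondeterministic polynomial-time verifier guesses such a history and the oracle counts the accepting guesses. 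The deterministic driver performs a bounded batch of such calls per step of the guaranteed polynomial-length computation, and finally reads off whether $\mathbf{yes}$ or $\mathbf{no}$ is sent to the environment.

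For the lower bound $\PsharpP \subseteq \PMCTSC$, it suffices to exhibit a polynomially uniform family of recognizer tissue systems solving some $\PsharpP$-complete problem — a counting version of satisfiability, the role played in the present paper by $\MIDSAT$. I would proceed in phases. A generation phase applies separation repeatedly to split one input membrane into the $2^n$ membranes corresponding to all truth assignments of the $n$ variables; a checking phase evaluates the formula inside each membrane by symport/antiport rules; a counting phase aggregates, through communication with the environment, the number of accepting membranes, so that the satisfying-assignment count becomes available as the multiplicity of a designated object; and a decision phase simulates the fixed polynomial-time Turing machine that computes the answer from this count and releases $\mathbf{yes}$ or $\mathbf{no}$ to the environment only at the last step, as a recognizer system requires.

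The main obstacle is the upper bound, and within it the bookkeeping that keeps the type/multiplicity representation simultaneously correct and polynomially manageable: one must argue that indistinguishable membranes genuinely remain interchangeable across maximal-parallel steps, and that the per-step update of multiplicities can honestly be phrased as $\mathbf{\#P}$ queries answered by a deterministic polynomial-time verifier, so that confluence lets the driver commit to the single computed outcome. The lower bound is comparatively routine, since the generation/checking skeleton is the standard brute-force-by-separation pattern (the same one adapted in Section~\ref{section:system}); the only additional ingredient is funnelling the count through the environment and then running a fixed polynomial-time post-processing, which the oracle machine underlying $\PsharpP$ licenses.
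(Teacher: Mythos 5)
A preliminary point: the paper does not prove this proposition at all --- it is imported verbatim from Leporati et al.\ \cite{LepoAl17}, and the paper's own contribution (Theorem \ref{theorem:midsat_solved}, Corollary \ref{corollary:characterization}) is built on top of it. So your proposal can only be measured against the proof in \cite{LepoAl17}, whose broad architecture --- two separate inclusions, a counting-based simulation for $\PMCTSC \subseteq \PsharpP$, and a uniform family solving a $\PsharpP$-complete problem for the converse --- your sketch does reproduce.

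Measured as a proof, however, there are genuine gaps, concentrated in the upper bound. First, the claim that two membranes with the same label and the same content ``are indistinguishable and evolve identically'' is false as stated: in a tissue system a rule $[u]_h \leftrightarrow [v]_{h'}$ needs a partner cell, and when three identical cells labeled $h$ compete for two available partners labeled $h'$, maximal parallelism forces identical cells to diverge. Moreover the number of distinct ``types'' (label plus content multiset) can itself grow exponentially, since antiport with the environment can pump in exponentially many objects, so the deterministic driver cannot even enumerate the types explicitly; your sketch never says how the type/multiplicity table is represented. Second, the assertion that the number of membranes (or objects) reaching a given type ``is exactly a $\mathbf{\#P}$ quantity'' because it counts histories presupposes that a nondeterministic polynomial-time verifier can check that a guessed history is admissible; but admissibility under maximal parallelism (and under the canonical resolution of nondeterminism that confluence licenses) depends on the entire exponential-size configuration, not on the guessed history alone. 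Making that check local --- defining the canonical computation so that per-object histories are verifiable in polynomial time --- is precisely the technical heart of the cited proof, and it is the step your sketch names as ``the main obstacle'' without resolving it. The lower-bound sketch is sound in outline (it is the same generation/checking/counting/decision pattern that the present paper implements for cell-like systems in Section \ref{section:system}), but calling it ``comparatively routine'' hides the substantial constructions needed to compare exponentially large multiplicities by communication rules and to release $\mathbf{yes}$ or $\mathbf{no}$ only at the final step, which is exactly what occupies most of this paper for the cell-like case.
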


\begin{definition}
    The $\MIDSAT$ decision problem is defined as follows:

    \textbf{Instance:} A satisfiable boolean formula $\phi(x_1, \dots, x_n)$ in conjuntive normal form.
    
    \textbf{Question:} Does the least significant bit of the middle (according to lexicographic order) satisfying assignment of $\phi$ equal $1$?
\end{definition}

The question is to determine the truth value of $x_n$ in the assignment which is the median among all the satisfying assignments of $\phi$, ordered by the lexicographic order.
This is one of the problems complete for $\PsharpP$ listed by Toda in \cite{Toda94}.

\begin{proposition}[Toda, \cite{Toda94}]
    \label{proposition:midsat_complete}
    $\MIDSAT$ is $\PsharpP$-complete.
\end{proposition}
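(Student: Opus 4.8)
The statement is classical (it is due to Toda); I outline the two inclusions that a direct proof would establish.

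First I would prove membership, $\MIDSAT \in \PsharpP$. Given a satisfiable CNF formula $\phi(x_1,\dots,x_n)$, one call to a $\mathbf{\#P}$ oracle (counting satisfying assignments, the canonical $\mathbf{\#P}$-complete function) returns $N$, the number of satisfying assignments, so the rank of the middle assignment in lexicographic order is a fixed function of $N$, say $m = \lceil N/2 \rceil$. To recover the $m$-th satisfying assignment I would use the self-reducibility of $\SAT$ by prefix counting: having fixed a prefix $x_1,\dots,x_{i-1}$, I query the oracle for the number $N_0$ of satisfying assignments of $\phi$ extending this prefix with $x_i = 0$; if $m \le N_0$ I set $x_i = 0$, otherwise I set $x_i = 1$ and replace $m$ by $m - N_0$. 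After $n$ queries the bits of the median assignment are known, and the machine accepts iff $x_n = 1$. This is a deterministic polynomial-time computation making polynomially many $\mathbf{\#P}$ queries, hence $\MIDSAT \in \PsharpP$.

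For the converse I would reduce an arbitrary $L \in \PsharpP$ to $\MIDSAT$. I would first normalise the oracle machine: using the standard fact that adaptive $\mathbf{\#P}$ queries collapse to a single one (the individual counts are combined into one large count, e.g.\ by interleaving binary digits or a Chinese-remainder packing), one reduces to a machine making a \emph{single} query to a $\mathbf{\#P}$ function $f$ and then deciding membership by a polynomial-time predicate of the input $x$ and the returned value $f(x)$; this predicate can be taken to test a designated bit of $f(x)$, or a threshold $f(x) \ge t$. Letting $\psi_x$ be the CNF whose satisfying assignments are counted by $f(x)$, the heart of the reduction is a gadget turning such a threshold test into a question about the least significant bit of a median satisfying assignment. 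I would build from $\psi_x$ a satisfiable formula $\Phi$ on the original variables together with polynomially many fresh control variables, the last of which plays the role of $x_n$, designed so that the added clauses split the satisfying assignments of $\Phi$ into two lexicographic blocks whose sizes depend in a controlled, explicitly computable way on $t$ and on padding counts. The blocks are arranged so that the median of $\Phi$ falls in the block where the final variable equals $1$ precisely when $f(x) \ge t$; a fixed built-in satisfying assignment guarantees satisfiability, and a one-unit adjustment of the padding fixes the parity convention in the definition of ``middle''.

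The delicate part is exactly this gadget in the hardness direction: one must engineer a polynomial-size CNF whose total number of satisfying assignments, and the lexicographic position of the median among them, are controlled finely enough that the single bit $x_n$ of the median encodes the desired threshold comparison on a $\mathbf{\#P}$ count. Balancing the two block sizes with explicit padding while keeping $\Phi$ in CNF, satisfiable, and of polynomial size --- and while respecting the precise $\lceil \cdot \rceil$ versus $\lfloor \cdot \rfloor$ parity conventions --- is where the real work lies; the membership direction and the single-query normalisation are comparatively routine.
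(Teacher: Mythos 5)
The paper does not actually prove this proposition: it is imported as a known result, attributed to Toda \cite{Toda94} (``one of the problems complete for $\PsharpP$ listed by Toda''), and used as a black box in the proof of Corollary \ref{corollary:characterization}. So your attempt has to be judged as a standalone reconstruction of Toda's theorem, not against any argument in this text.

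Your membership half is sound: one $\mathbf{\#P}$ query yields the count $N$, the median has rank $\lceil N/2 \rceil$ (which matches the paper's convention of taking the lower of the two middle assignments when $N$ is even), and prefix self-reducibility recovers its bits with $n$ further counting queries, after which you check $x_n$. The genuine gap is in the hardness half, which as written is a specification of a proof rather than a proof. The entire content of the completeness theorem is the gadget you defer: given $\psi_x$ and a threshold $t$, construct a polynomial-size satisfiable CNF $\Phi$ whose \emph{median} satisfying assignment has last bit $1$ exactly when $f(x) \geq t$. You list the properties $\Phi$ must have (``the blocks are arranged so that the median falls in the block where the final variable equals $1$ precisely when $f(x) \geq t$''), but that sentence \emph{is} the claim to be established, not an argument for it; the difficulty is that the quantity being tested, $f(x)$, is unknown to the reduction, while any padding added to steer the median shifts both the total count and the median's position at once, so one must exhibit an explicit arithmetic arrangement (this is exactly what Toda's construction supplies) rather than assert its existence. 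Moreover, the single-query normalization $\P^{\mathbf{\#P}} = \P^{\mathbf{\#P}[1]}$ that you invoke as routine is itself a nontrivial theorem belonging to the same circle of results, so even the ``comparatively routine'' part of your plan rests on unproved machinery. Flagging the gadget as ``where the real work lies'' is honest, but it leaves the reduction direction without its key construction, so the proposal does not establish $\PsharpP$-hardness.
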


\begin{example}
    With the boolean formula $\phi(x_1,x_2,x_3) = (x_1 \lor x_2 \lor \lnot x_3) \land (x_1 \lor \lnot x_2 \lor x_3) \land (\lnot x_1 \lor x_2 \lor x_3)$, the satisfying assignments are $(0,0,0)$, $(0,1,1)$, $(1,0,1)$, $(1,1,0)$, $(1,1,1)$, ordered in the lexicographic order.
    Then the median is $(1,0,\mathbf{1})$ and the answer to the problem is \textbf{yes}.
    
    When an even number of assignments satisfy the formula, the lower of the two median assignments is chosen as the median.
\end{example}

\begin{theorem}
\label{theorem:midsat_solved}
    $\MIDSAT \in \PMCCSCk{3}$
\end{theorem}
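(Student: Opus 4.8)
The plan is to construct a polynomially uniform family $\mathbf{\Pi} = (\Pi_n)_{n \in \NN}$ of recognizer P systems in $\CSCk{3}$, one for each input size $n$ (the number of variables), that decides $\MIDSAT$. Since the paper states the system is based on the $\SAT$-solving system from \cite{VCAl15}, I would start from that construction and augment it to count satisfying assignments and extract the median. Following the standard active-membranes paradigm, each $\Pi_n$ will proceed in three conceptual phases: a \emph{generation} phase, a \emph{counting/checking} phase, and a \emph{decision} phase.

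\emph{Generation phase.} First I would use the separation rules to create, in a polynomial number of steps, $2^n$ elementary membranes, one for each of the $2^n$ truth assignments of $x_1,\dots,x_n$. The key technical point is that separation distributes (rather than duplicates) the contents, so the usual division-based duplication trick is unavailable: I would need to pre-load each membrane with enough auxiliary objects so that after separating on a splitting object, each of the two daughter membranes receives a complete copy of the objects encoding ``$x_i = 0$'' in one and ``$x_i = 1$'' in the other, for the $i$-th variable. This is exactly the mechanism developed in \cite{VCAl15} for $\SAT$, and I would reuse it, keeping every rule of length at most $3$ by communicating objects in bounded-size bundles and using the environment (which supplies unbounded copies of objects in $\mathcal{E}$) as a reservoir. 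After $n$ rounds of separation each of the $2^n$ membranes encodes a distinct assignment.

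\emph{Counting and median extraction.} Within each membrane I would run a checker, imported from the $\SAT$ system, that determines whether its assignment satisfies $\phi$ and, if so, releases a ``satisfied'' signal object. The genuinely new ingredient beyond $\SAT$ is that $\MIDSAT$ requires \emph{counting} the satisfying assignments and identifying the lexicographically median one, then reading off $x_n$. The natural approach is to have each satisfying membrane emit a distinguished counter object tagged with (enough of) its assignment's lexicographic rank, and then use symport/antiport exchanges with the skin and environment to aggregate these counters so the skin can compute the total number $N$ of satisfying assignments and locate position $\lceil N/2 \rceil$. Because the system is confluent and must answer only by releasing $\mathbf{yes}$ or $\mathbf{no}$ at the very last step, I would arrange a timed cascade of rules, synchronized by a clock implemented with a chain of auxiliary objects, so that after aggregation the skin deterministically determines whether the median assignment sets $x_n = 1$.

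\emph{Decision phase and the main obstacle.} Once the median's value of $x_n$ is known at the skin, I would fire the appropriate rule sending $\mathbf{yes}$ or $\mathbf{no}$ to the environment as the final action, establishing soundness and completeness. I would close by verifying polynomial uniformity: the alphabet, membrane structure, and rule set of $\Pi_n$ are all describable by simple indexed schemata, so a deterministic Turing machine builds $\Pi_n$ from $1^n$ in time polynomial in $n$; and the whole computation halts in $O(\mathrm{poly}(n))$ steps. I expect the \textbf{main obstacle} to be the median-extraction and counting machinery: unlike $\SAT$, where it suffices to detect the existence of a single satisfying membrane, here I must aggregate and compare exponentially large counts and rank information using only length-$\leqslant 3$ symport/antiport rules and membrane separation, without the duplication power of division. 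Designing the counter-aggregation so that it runs in polynomial time, stays confluent, and releases the answer object strictly at the last step—while respecting the rule-length bound of $3$—will be the delicate part of the construction and the bulk of the correctness proof.
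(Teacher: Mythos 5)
Your proposal reuses the generation and checking machinery of the $\SAT$ solver from the cited work, which indeed matches the paper's starting point, but the heart of the theorem --- extracting the median --- is exactly what you defer as the ``main obstacle,'' and the specific route you sketch for it would fail. Tagging each satisfying membrane's emitted counter with ``(enough of) its assignment's lexicographic rank'' is not compatible with polynomial uniformity: the rank ranges over up to $2^n$ values, so either you need exponentially many object types (the alphabet of $\Pi_{s(x)}$ must be constructible in polynomial time), or the rank must be represented in unary by object multiplicities, in which case ``aggregating the counters so the skin can compute $N$ and locate position $\lceil N/2 \rceil$'' is precisely the exponential counting-and-selection problem that you have not solved, and that length-$\leqslant 3$ symport/antiport rules give you no direct way to solve. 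In other words, your proof plan reduces $\MIDSAT$ to a subproblem that is as hard as the theorem itself.

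The paper's key idea, absent from your proposal, is that the median is never located by rank: it is \emph{constructed bit by bit} via a binary search, so that no exponential quantity ever needs to be computed or stored. For $k = 1, \dots, n$, membrane $1$ builds a reference assignment $x_1 \cdots x_{k-1} 1 0 \cdots 0$ from the bits already determined; every satisfying membrane then compares its own assignment to this reference by a bitwise, $O(n)$-step comparison using length-$\leqslant 3$ antiports; finally the $k$-th bit is decided by a \emph{majority} test implemented by cancellation: membranes whose assignment is greater than or equal to the reference yield an object $G_k$, those strictly lower yield $L_k$, the rule $\sympout{L_k, G_k}$ removes them in pairs, and $x_k = 1$ exactly when some $G_k$ survives, which triggers $\antiport{G_{k}, \bar{\mu}_{k}}{\mu_{k}}$ and records the bit as $\mu_k$ (otherwise $\bar{\mu}_k$ remains). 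After $n$ iterations the answer is read off from $\mu_n$ versus $\bar{\mu}_n$. This pairwise-cancellation-plus-emptiness-test is what makes exponential majority comparison feasible with rules of length at most $3$, and it is also what keeps the system confluent and polynomially timed. Without this mechanism (or an equivalent one), together with the timing/synchronization details that make it release $\mathbf{yes}$ or $\mathbf{no}$ only at the last step, your outline does not constitute a proof of $\MIDSAT \in \PMCCSCk{3}$.
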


We prove this result in the following section by implementing a family of recognizer symport/antiport P systems with membrane separation rules and rules of length at most 3 solving $\MIDSAT$ in a uniform way and in polynomial time.

\begin{corollary}[Characterization of $\PsharpP$]
    \label{corollary:characterization}
    $\PMCCSCk{3} = \PsharpP$
\end{corollary}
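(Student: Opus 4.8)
The plan is to establish the two inclusions separately, using the results assembled in this section as the building blocks; neither direction requires new machinery beyond a standard closure argument.

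For the inclusion $\PMCCSCk{3} \subseteq \PsharpP$ I would argue by a short chain. Restricting the rule length to at most $3$ yields a subclass, $\CSCk{3} \subseteq \CSC$, so Proposition \ref{proposition:pmc} gives $\PMCCSCk{3} \subseteq \PMCCSC$. Then Proposition \ref{proposition:cell_included_tissue} embeds the cell-like class into the tissue-like one, $\PMCCSC \subseteq \PMCTSC$, and Proposition \ref{proposition:characterization_tissue} identifies $\PMCTSC = \PsharpP$. Concatenating these three steps gives the upper bound with no further work.

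For the reverse inclusion $\PsharpP \subseteq \PMCCSCk{3}$ the two named ingredients are that $\MIDSAT$ is $\PsharpP$-complete (Proposition \ref{proposition:midsat_complete}) and that $\MIDSAT \in \PMCCSCk{3}$ (Theorem \ref{theorem:midsat_solved}). The remaining ingredient, which is the real crux, is a \emph{closure property}: the class $\PMCCSCk{3}$ is closed under polynomial-time many-one reductions. Granting this, every problem $X \in \PsharpP$ reduces to $\MIDSAT$ via some polynomial-time computable function $r$, and composing $r$ with the polynomially computable size and encoding functions of the uniform family solving $\MIDSAT$ produces a uniform family in $\CSCk{3}$ that decides $X$ within a polynomial number of steps; hence $X \in \PMCCSCk{3}$, and the inclusion follows.

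The main obstacle is therefore this closure lemma, which I would prove in the standard way for $\PMC$ classes. Given a reduction $r$ from $X$ to $\MIDSAT$ and a family $(\Pi_n)_{n\in\NN}$ solving $\MIDSAT$, to decide an instance $x$ of $X$ one selects the P system indexed by $s(r(x))$ and feeds it the input $cod(r(x))$. Since the composition of polynomial-time computable functions is again polynomial-time computable, the new size and encoding functions remain polynomially computable and the family stays polynomially uniform; soundness and completeness are inherited through the correctness of $r$, and the step count stays polynomial. Care is needed only to verify that recomputing the input does not disturb the recognizer structure (the presence of the $\mathbf{yes}$ and $\mathbf{no}$ objects, the condition $h_{out}=0$, and the release of an answer object to the environment only at the halting step), which is immediate because the underlying family is left unchanged and only its input is recomputed. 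Combining the two inclusions yields $\PMCCSCk{3} = \PsharpP$.
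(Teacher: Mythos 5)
Your proposal is correct and takes essentially the same route as the paper: the identical chain $\PMCCSCk{3} \subseteq \PMCCSC \subseteq \PMCTSC = \PsharpP$ for the upper bound, and the $\PsharpP$-completeness of $\MIDSAT$ (Proposition \ref{proposition:midsat_complete}) together with Theorem \ref{theorem:midsat_solved} for the lower bound. The only difference is one of explicitness: you state and sketch the closure of $\PMCCSCk{3}$ under polynomial-time many-one reductions, whereas the paper invokes this standard property implicitly when it concludes $\PsharpP \subseteq \PMCCSCk{3}$ directly from completeness and membership.
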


\begin{proof}
    By definition, $\PMCCSCk{3} \subseteq \PMCCSC$, and from Propositions \ref{proposition:cell_included_tissue} and \ref{proposition:characterization_tissue}, we obtain $\PMCCSCk{3} \subseteq \PsharpP$.
    
    By Theorem \ref{theorem:midsat_solved} and because $\MIDSAT$ is $\PsharpP$-complete (Proposition \ref{proposition:midsat_complete}), the following inclusion holds: $\PsharpP \subseteq \PMCCSCk{3}$.

    The corollary comes out from these inclusions.
\end{proof}

\section{\texorpdfstring{A System Solving $\MIDSAT$}{A System Solving MIDSAT}}
\label{section:system}

This section is devoted to the proof of Theorem \ref{theorem:midsat_solved}.

The P system we create is an evolution of the one used to solve the $\SAT$ problem in \cite{VCAl15}.
We first explain briefly how their system works and what changes we made.
Then we provide an overview of the computation.
Finally, we give a detailed proof of the theorem.

The complexity of the system arises principally from the constraint to have rules of length at most three.
This requires us to add steps when we want four objects to interact at a precise step or when we want to exchange an object with $n$ copies of another one.

\subsection{\texorpdfstring{Some Modifications to the $\CSCk{3}$ $\SAT$ Solver}{Some Modifications to the CSC(3) SAT Solver}}

Our work is based on the system implemented by \cite{VCAl15} solving $\SAT$.

This one runs in three phases: the generation phase, where $2^n$ membranes are created with the separation rule, each one representing an assignment of the $n$ variables of the formula; the checking phase consists in the evaluation of the formula in each membrane; finally, the output phase answers $\mathbf{yes}$ or $\mathbf{no}$ depending on whether a membrane satisfies the formula.

The instance is a boolean formula $\phi(x_1, \dots, x_n) = \bigwedge_{j = 1}^m C_j$ in CNF, where $C_j = l_{i_1,j} \lor \cdots \lor l_{i_{r_j},j}$ and $l_{i_k,j} = x_{i_k}$ or $\lnot x_{i_k}$, $j \in \intinter{1}{m}$, $k \in \intinter{1}{r_j}$.
They encode it by $cod(\phi) = \{ x_{i_k,j} : l_{i_k,j} = x_{i_k,j} \} + \{ \bar{x}_{i_k,j} : l_{i_k,j} = \lnot x_{i_k,j} \}$, and $s(\phi) = \langle m,n \rangle$, where $\langle m,n \rangle = ((n + m)(n + m + 1)/2) + n$ is the common bijection between $\NN$ and $\NN^2$.

We denote by $\Pi^{\mathbf{SAT}}_{\langle m, n \rangle}$ the systems of the family they provide.
$$\Pi^{\mathbf{SAT}}_{\langle m, n \rangle} = (\Gamma, (\Gamma \setminus \Gamma_1, \Gamma_1), \Sigma, \mathcal{E}, [ [\ ]_2 [\ ]_3 ]_1, \mathcal{M}_1, \mathcal{M}_2, \mathcal{M}_3, \mathcal{R}_1, \mathcal{R}_2, \mathcal{R}_3, 1, 0)$$

The system of \cite{VCAl15} has initially three membranes.
The skin (membrane 1) is the input membrane.
It contains: membrane 2, that will separate to obtain $2^n$ membranes 2 (one for each truth assignment), and membrane 3, used during the output phase.
The rules have a maximum length of 3 objects.

At the end of the checking phase, that is after $\DeltaSAT = 3n + 2m$ steps, every membrane 2 corresponding to a satisfying assignment of $\phi$ contains an object $e_{i,m}$ or $\bar{e}_{i,m}$ for $i \in \intinter{1}{n}$.

In our implementation, we re-use this system until the end of the checking phase.
We delete the membrane 3 and the rules associated with it, because it was used only during the output phase.
We remove the objects $\{ f_r \}_{r \in \intinter{0}{\DeltaSAT}} \cup \{ f'_p \}_{p \in \intinter{0}{\DeltaSAT+1}} = D$.
We remove also the rules $\{\sympout{E_0 f_{\DeltaSAT} \mathbf{yes}}, \sympout{f_{\DeltaSAT} \mathbf{no}}\} = D_1 \subseteq \mathcal{R}_1$, and $\{\sympout{e_{i,m}E_0}, \sympout{\bar{e}_{i,m}E_0}\}_{i \in \intinter{1}{n}} = D_2 \subseteq \mathcal{R}_2$, so that at this point, we obtain a halting configuration after $\DeltaSAT$ steps.

Then, we add three phases so the computation solves $\MIDSAT$:
\begin{itemize}
    \item The \emph{activation phase} makes the membranes 2 corresponding to a satisfying assignment able to interact during the next phases.
    \item The \emph{search phase} searches for the lexicographically middle satisfying assignment, determining the values of the variables, one by one.
    The determination of the $k$-th variable needs three sub-phases ($k \in \intinter{1}{n}$):
    \begin{itemize}
        \item \emph{Preparation sub-phase}: the preparation cannot be done previously in parallel because it needs the result of the previous
        determination phase.
        \item \emph{Comparison sub-phase}: all the satisfying assignments are compared to a well-chosen assignment, see the next paragraph giving the main idea.
        \item \emph{Determination sub-phase}: depending on the number of assignments lower or greater than the one they have been compared with, the $k$-th variable of the median assignment is determined.
    \end{itemize}
    \item The \emph{output phase} returns $\mathbf{yes}$ or $\mathbf{no}$ whether $x_n = 1$ or $x_n = 0$ in the assignment found previously.
\end{itemize}

We describe all these phases more precisely in the following subsection.

\paragraph{The Key Idea.}
To determine the median $x_1 \cdots x_n$ of satisfying assignments, the search phase does the same thing as a research of the median of a set of values between $0$ and $2^n-1$.
This works by finding the bits of the median, from the most to the least significant bit, by counting the number of values greater and lower than a current well-chosen number.
This process is very close to a binary search: we first consider $10 \cdots 0 = 2^{n-1}$, if there are strictly more numbers greater or equal than $10\cdots 0$, than numbers less than $10\cdots 0$, then the most significant bit $x_1$ of the median is $1$, else $0$.
To find the $k$-th most significant bit $x_k$, we must consider $x_1 \cdots x_{k-1} 1 0 \cdots 0$.
After $n$ iterations, we obtain the median.

For the sake of a better understanding of the different phases, we define a few constants for the number of steps each (sub-)phase lasts for and when it begins:

\begin{center}
    \begin{tabular}{|l|l|}
        \hline
        \multicolumn{1}{|c|}{Sub-phases duration} & \multicolumn{1}{|c|}{Phases duration} \\
        \hline
        $\DeltaPrep = n+5$ & $\DeltaSAT = 3n + 2m$ \\
        \hline
        $\DeltaComp = 2n+1$ & $\DeltaACT = n+2$ \\
        \hline
        $\DeltaDete = 3$ & $\DeltaSEA = n \cdot \DeltaLoop$ \\
        \hline
        $\DeltaLoop = \DeltaPrep + \DeltaComp + \DeltaDete$ & $\DeltaOUT = 1$ \\
        \hline
        \multicolumn{2}{|c|}{Initial step of (sub)-phases} \\
        \hline
        \multicolumn{2}{|l|}{$\IPrep(k) = \DeltaSAT + \DeltaACT + (k-1) \cdot \DeltaLoop$, $k \in \intinter{1}{n}$} \\
        \hline
        \multicolumn{2}{|l|}{$\IComp(k) = \IPrep(k) + \DeltaPrep$, $k \in \intinter{1}{n}$} \\
        \hline
        \multicolumn{2}{|l|}{$\IDete(k) = \IComp(k) + \DeltaComp$, $k \in \intinter{1}{n}$} \\
        \hline
        \multicolumn{2}{|l|}{$\IOUT = \DeltaSAT + \DeltaACT + \DeltaSEA$} \\
        \hline
    \end{tabular}
\end{center}

Thus, the family $\mathbf{\Pi} = (\Pi_{\langle m, n\rangle})_{m, n}$ we provide (described earlier), extends the system $\Pi^{\mathbf{SAT}}_{\langle m, n \rangle}$.
It is a polynomially uniform family of recognizer symport/antiport P systems with membrane separation and rules of length at most 3, defined as follows:
$${\Pi}_{\langle m, n \rangle} = (\Gamma', (\Gamma' \setminus \Gamma_1, \Gamma_1), \Sigma, \mathcal{E}', [ [\ ]_2 ]_1, \mathcal{M}'_1, \mathcal{M}_2, \mathcal{R}'_1, \mathcal{R}'_2, 1, 0)$$
\begin{itemize}
    \item $\mathcal{R}'_1 = (\mathcal{R}_1 \setminus D_1) \cup \mathcal{R}''_1$,
    \item $\mathcal{R}'_2 = (\mathcal{R}_2 \setminus D_2) \cup \mathcal{R}''_2$,
    \item $\mathcal{E}' = (\mathcal{E} \setminus D) \cup \mathcal{E}''$,
    \item $\Gamma' = (\Gamma \setminus D) \cup \mathcal{E}'' \cup \{ \xi_{i,0} \}_{i \in \intinter{1}{n}} \cup \{ \omega_0 \}$,
    \item $\mathcal{M}'_1 = \mathcal{M}_1 - \{ f_0 \} - \{ f'_p \}_{p \in \intinter{1}{\DeltaSAT+1}} + \{ \mathbf{no}, \omega_0 \} + \{ \xi_{i,0} \}_{i \in \intinter{1}{n}}$.
\end{itemize}

The reader could find the alphabet of the objects added to the environment $\mathcal{E}''$ and the lists $\mathcal{R}''_1$ and $\mathcal{R}''_2$ of rules added to membranes $1$ and $2$ in the Appendix \ref{appendix:definition}.

\subsection{Computation Overview}

Here we present a detailed overview of the computation of the system $\Pi_{\langle m, n \rangle}$ on input $cod(\phi)$.
As mentioned in the previous subsection, the system solves $\SAT$ until step $\DeltaSAT = 3n +2m$, where the membranes 2 corresponding to a satisfying assignment of $\phi$ contain an object $e_{i,m}$ or $\bar{e}_{i,m}$, $i \in \intinter{1}{n}$. 
We explain the effect of activation, search, and output phases.

Remind that the step $s$ applies between configurations $\mathcal{C}_{s-1}$ and $\mathcal{C}_s$.

\paragraph{A Few Words About Counters.}
To control the timing of the system, we use counters situated in membrane $1$.
These objects keep the information of which is the current step and allow some rules to be used at a precise step.
Their objective is to bring other objects into the system.
The counters are the objects $\xi_{i,s}$ and $\omega_{s}$, they appear only in rules of $\mathcal{R}'_1$.

Counters increment their index each step with rules like $\antiport{\xi_{i,t}}{\xi_{i,t+1}}$ and $\antiport{\omega_{t}}{\omega_{t+1}}$.
They verify the following property: after step $s$ (that is, in configuration $\mathcal{C}_s$), the only $\xi$ and $\omega$ objects present in the system are $\xi_{i,s}$, $\omega_s \in m_s(1)$, in membrane $1$, $i \in \intinter{1}{n}$.

We have $n$ variants of the $\xi$ object thanks to the index $i \in \intinter{1}{n}$.
Sometimes, they bring together with their successor $\xi_{i,s+1}$ a different object so that, at a precise step, we can make appear $n$ (possibly distinct) objects in membrane $1$:
\begin{itemize}
    \item $\nu_{i,0}$, at step $\DeltaSAT +1$, $i \in \intinter{1}{n}$,
    \item $\zeta'_{k,i}$, at step $\IPrep(k)$, $k \in \intinter{1}{n}$, $i \in \intinter{1}{n}$,
    \item $\bar{\mu}_{k}$, at step $\IPrep(k+1) +1$, $k \in \intinter{1}{n}$.
\end{itemize}

$\omega$ objects are present in the system in $2^n$ ocurrences.
They are useful to bring into the system a lot of objects in only one step.
The rules $\antiport{\omega_{s}}{\omega_{s+1}^2}$, $s \in \intinter{0}{n-1}$ apply in the first $n$ steps of the computation (in parallel to $\SAT$ computation) to duplicated $\omega$ objects until they are in $2^n$ occurrences.
Then, as for $\xi$, $\omega$ counters bring the following objects in the system within $2^n$ occurrences:
\begin{itemize}
    \item $\omega'_{\DeltaSAT -1}$, at step $\DeltaSAT -1$,
    \item $\sigma$, from step $\DeltaSAT +1$ to step $\DeltaSAT +n$,
    \item $\theta_{\IComp(k)-1}$, at step $\IComp(k) -1$, $k \in \intinter{1}{n}$,
    \item $\omega'_{\IComp(k) + 2i}$, at step $\IComp(k) + 2i$, $k \in \intinter{1}{n}$, $i \in \intinter{0}{n-2}$,
    \item $p'_{k,i+1}$, at step $\IComp(k) + 2i +1$, $k \in \intinter{1}{n}$, $i \in \intinter{0}{n-2}$,
    \item $\kappa'''_{k}$, at step $\IComp(k) + 2n$, $k \in \intinter{1}{n}$,
    \item $\chi_{k}$, at step $\IDete(k)$, $k \in \intinter{1}{n}$.
\end{itemize}

\paragraph{Activation Phase.}
This phase lasts for $\DeltaACT = n + 2$ steps and brings in the membranes $2$ corresponding to a satisfying assignment of $\phi$ the objects $\Lambda_i$ (the assignment makes $x_i$ true) and $\Psi_i$ (the assignment makes $x_i$ false) that will interact in the next phases ($i \in \intinter{1}{n}$).

In membrane 1, during steps $\DeltaSAT +s$, $s \in \intinter{0}{n-2}$, objects $\omega'_{\DeltaSAT +s}$ bring the objects $\sigma'_s$ in $2^n$ occurrences, and objects $\nu_{i,s}$ duplicated.
At step $\DeltaSAT + n -1$, $2^n$ objects $\sigma'_{n-1}$ enter the system and every $\nu_{i,n-1}$ ($i \in \intinter{1}{n}$, in $2^n$ occurrences) bring objects $\Lambda_i$ and $\Psi_i$ from the environment.
In parallel, from steps $\DeltaSAT+1$ to $\DeltaSAT+n$, $2^n$ objects $\sigma$ enter membrane $1$ with the help of the counter $\omega$.

At step $\DeltaSAT +1$, each membrane $2$ corresponding to a satisfying assignment of $\phi$ exchanges its object $e_{i,m}$ (or $\bar{e}_{i,m}$) for $\sigma'_0$ and $T_i$ (or $F_i$).
Then during each step $\DeltaSAT + i+2$, $\sigma'_i$ is exchanged with $\sigma'_{i+1}$ and one object $\sigma$ (for $i\in\intinter{0}{n-2}$), and at step $\DeltaSAT + n+1$, an other object $\sigma$ enters those membranes $2$.
Finally, at step $\DeltaSAT + n+2 = \DeltaSAT + \DeltaACT$, the rules $\antiport{\sigma, T_{i}}{\Lambda_{i}}$ and $\antiport{\sigma, F_{i}}{\Psi_{i}} \in \mathcal{R}'_2$ are used so the objects $T_i$, $F_i$ are replaced by $\Lambda_i$ and $\Psi_i$ in the activated membranes.

\paragraph{Search Phase.}
This phase determines the values of the variables $x_1, \dots, x_n$ of the lexicographically middle satisfying assignment.
The P system will repeat $n$ times the three sub-phases, finding the value of one variable at each loop.
The variables are determined in order, from $x_1$ to $x_n$.

Let's describe the $k$-th iteration of the sub-phases, in which $x_k$ is determined ($k \in \intinter{1}{n}$).
Preparation, comparison, and determination sub-phases last for $\DeltaPrep = n + 5$, $\DeltaComp = 2n + 1$, and $\DeltaDete = 3$ steps.
Thus, the search phase lasts for $n \cdot \DeltaLoop= 3n^2 + 9n$ steps.

\paragraph{Preparation Sub-phase $k$.}
When this sub-phase begins, the values $x_1, \dots, x_{k-1}$ are coded by the presence in membrane 1 of $\mu_j$, if $x_j = 1$, or $\bar{\mu}_j$, if $x_j = 0$ ($j \in \intinter{1}{k-1}$).
The objective of this phase is to bring $2^n$ occurrences of the objects $\ddot{\epsilon}^N_{k,1} \cdots \ddot{\epsilon}^N_{k,k-1}$ $\epsilon_{k,k}$ $\bar{\epsilon}^N_{k,k+1} \cdots \bar{\epsilon}^N_{k,n}$, where $\ddot{\epsilon}^N_{k,j} = \epsilon^N_{k,j}$ or $\bar{\epsilon}^N_{k,j}$ whether $x_j = 1$ or $0$, $j \in \intinter{1}{k-1}$.

Here, we use $\epsilon$ for $l$, $g$, $p$; and $N$ for $\Lambda$, $\Psi$, so that in fact $6n \cdot 2^n$ objects must appear (6 possible combinations for each $i \in \intinter{1}{n}$, in $2^n$ occurrences).
The presence/absence of the bar on these objects represents $x_1 \cdots x_{k-1} 1 0 \cdots 0$, the reference assignment we want to compare to all satisfying assignments.
Moreover, $n2^n$ occurrences of $\phi$, $\phi'$ will appear in the system at the end of this step.

This sub-phase uses only rules from the rule set $\mathcal{R}'_1$ of membrane $1$.
One step before the preparation sub-phase, objects $\zeta'_{k,i}$ are created with the $\xi_{i,\IPrep(k)}$ counters.
Next step is $\IPrep(k) +1$, and $\zeta'_{k,i}$ objects become either $\zeta_{k,i}$ or $\bar{\zeta}_{k,i}$, with a bar iff the $i$-th value of the reference is a zero.
For $i < k$, the choice is done by using $\mu_i$ or $\bar{\mu}_i$.
For $i = k$, it will be $\zeta_{k,k}$, and for $i > k$, $\bar{\zeta}_{k,i}$.
Then $\zeta_{k,i}$/$\bar{\zeta}_{k,i}$ are exchanged for $\eta_{k,i,0}$/$\bar{\eta}_{k,i,0}$, and if $i < k$ then $\mu_i$/$\bar{\mu}_i$ re-enter membrane $1$.
Objects $\eta_{k,i,j}$/$\bar{\eta}_{k,i,j}$ are duplicated at each of the $n$ next steps, until $\eta_{k,i,n}$/$\bar{\eta}_{k,i,n}$ are present in membrane $1$ in $2^n$ occurrences.
Objects $\eta_{k,i,n}$/$\bar{\eta}_{k,i,n}$ are exchanged with $\lambda_{k,i}$/$\bar{\lambda}_{k,i}$ and $\pi_{k,i}$/$\bar{\pi}_{k,i}$, the formers then become $l_{k,i}$/$\bar{l}_{k,i}$ and $g_{k,i}$/$\bar{g}_{k,i}$, whereas the laters become $p_{k,i}$/$\bar{p}_{k,i}$ and $\pi'_k$.
Finally, objects $\epsilon^N_{k,i}$/$\bar{\epsilon}^N_{k,i}$, appear in $2^n$ occurrences ($N \in \{\Lambda, \Psi\}$, $i\in\intinter{1}{n}$) and objects $\phi$, $\phi'$ appear in $n2^n$ occurrences.

\paragraph{Comparison Sub-phase $k$.}
During this phase, every active membrane 2 will compare their assignment $N_1\cdots N_n$ (stored by the objects $\Lambda_i$/$\Psi_i$, $i \in \intinter{1}{n}$) to the reference assignment $\nu = x_1 \cdots x_{k-1} 1 0 \cdots 0$ coded in membrane 1.
An object $\epsilon^N_{k,i}$/$\bar{\epsilon}^N_{k,i}$ is sent in the membrane 2 to indicate if the number given by the first $i$ bits of the assignment is lower ($\epsilon = l$), equal ($\epsilon = p$) or greater ($\epsilon = g$) than the one given by the first $i$ bits of the reference.

The following property holds during the comparison sub-phase:

For $i\in\intinter{1}{n-1}$, the object $\epsilon^N_{k,i}$/$\bar{\epsilon}^N_{k,i}$ present in a membrane $2$ after step $\IComp(k) + 2i-1$ verifies: (1) $\epsilon = l$ if $N_1\cdots N_i < \nu_1\cdots\nu_i$, $\epsilon = p$ if $N_1\cdots N_i = \nu_1\cdots\nu_i$, or $\epsilon = g$ if $N_1\cdots N_i > \nu_1\cdots\nu_i$; (2) $N = \Lambda$ if $N_i = \Lambda_i$ or $N = \Psi$ if $N_i = \Psi_i$; and (3) there is a bar over $\epsilon$ iff $\nu_i = 0$.
For the case $i=n$, the property asks to consider configuration $\IComp(k) + 2n$ instead of $\IComp(k) + 2n -1$.

The exponent $N = \Lambda$ or $\Psi$ is used to know which object among $\Lambda_i$ and $\Psi_i$ must be re-sent to the membrane 2 in the next step.

First, we explain what happens between the environment and membrane $1$ during this sub-phase.
At step $\IComp(k) +2i +1$, $i\in\intinter{0}{n-2}$, object $\omega'_{\IComp(k) +2i}$ that barely appeared in membrane $1$, will be exchanged with objects $l'_{k,i+1}$ and $g'_{k,i+1}$.
At the same time, object $p'_{k,i+1}$ enters the system.
These three objects are present in $2^n$ occurrences.
Still during this first step, object $\phi'_k$ becomes $\phi''_k$.
In the next step, objects $\phi_k$ and $\phi''_k$ will leave the system together.
Independently, object $\theta_{\IComp(k) + j}$ (created before by the counter $\omega$) will duplicate until $\theta_{\IComp(k) + 2n -2}$ being in $6\cdot 2^n$ occurrences in configuration $\mathcal{C}_{\IComp(k) + 2n -2}$.
Then it is exchanged with $\kappa_k$ and $\kappa'_k$.
At step $\IComp(k) + 2n$, $\kappa'_k$ becomes $\kappa''_k$ and some occurrences of $\kappa_k$ leave the system together with all objects $\epsilon^N_{k,n}$/$\bar{\epsilon}^N_{k,n}$ present in membrane $1$.
Such objects are still in the system, but in membranes $2$.
To avoid them to be pulled out from the system after they leave membranes $2$, the objects $\kappa'_k$ are sent out by objects $\kappa''_k$.
We recall that $2^n$ objects $\kappa'''_k$ appear in membrane $1$ at step $\IComp(k) + 2n$ with the counter $\omega$.

Secondly, for activated membranes $2$, the comparison sub-phase decomposes into $n-1$ groups of two steps and one group of three steps, each one extending the comparison by one bit.
First (steps $\IComp(k) +1$, $\IComp(k) +2$), an object $\epsilon^N_{k,1}$/$\bar{\epsilon}^N_{k,1}$ enters the membrane together with $\phi_k$, pulling out $N_1$, with respect to the properties (1), (2) and (3).
Then $N_1$ can re-enter the membrane $2$ with $\epsilon'_{k,1}$ to keep the result of the comparison.
Steps $\IComp(k) + 2(i-1)+1$, $\IComp(k) + 2(i-1)+2$, $i\in\intinter{2}{n-1}$ do the same things, but there is no need of $\phi_k$, and they use $\epsilon'_{k,i-1}$ to preserve the properties.
Finally, the group of three steps begins by $\IComp(k) + 2n-1$ which is similar to the others, nothing occurs at step $\IComp(k) + 2n$ (at that time, in membrane $1$, objects $\epsilon^N_{k,n}$/$\bar{\epsilon}^N_{k,n}$ go out of the system).
Last step of the subphase is $\IComp(k) + 2n+1$ where $\epsilon^N_{k,n}$/$\bar{\epsilon}^N_{k,n}$ leaves membrane $2$ and brings back $N_n$ from membrane $1$ to $2$, together with $\kappa'''_{k}$.

The value of $\epsilon$ in this last object keeps the result of the comparison of the reference and the assignment of the membrane $2$ considered.

\paragraph{Determination Sub-phase $k$.}
This sub-phase will count the number of satisfying assignments greater or equal than the reference.
If they are in majority (strictly more than the ones strictly lower than the reference), then the system must contain the object $\mu_k$ at the end of this phase, else it will be $\bar{\mu}_k$.

To count, we first exchange the objects $\bar{g}^{N}_{k,n}$, $\bar{p}^{N}_{k,n}$ (or $g^{N}_{n,n}$, $p^{N}_{n,n}$ when $k=n$) with new objects $G_k$, and the objects $\bar{l}^{N}_{k,n}$ (or $l^{N}_{n,n}$ when $k=n$) with objects $L_k$.
This is done with the help of objects $\chi_k$, created at the previous step with the counter $\omega_{\IDete(k)}$.
At the next step, objects $G_k$ and $L_k$ leave the system by pairs with the rule $\sympout{L_k, G_k} \in \mathcal{R}_1$ and the object $\bar{\mu}_k$ enters membrane $1$ with the counter $\xi_{1,\IDete(k)+2}$.
Finally, if there are objects $G_k$ remaining, the rule $\antiport{G_{k}, \bar{\mu}_{k}}{\mu_{k}}$ applies, bringing $\mu_k$ in the system.
If not, $\bar{\mu}_k$ stays into membrane $1$.

Preparation sub-phase $k+1$ (or the output phase if $k=n$) can now begin.

\paragraph{Output Phase.}
At the end of the search phase (step $\IOUT$), membrane 1 contains $\mu_n$ or $\bar\mu_n$ depending on whether $x_n = 1$ or $0$ in the lexicographically middle satisfying assignment.
This phase lasts for only one step, where one of these two rules $\sympout{\mu_n\ \mathbf{yes}\ \xi_{1, \IOUT}}, \sympout{\bar{\mu}_n\ \mathbf{no}\ \xi_{1, \IOUT}}\in \mathcal{R}_1$ is applied.
The object $\xi_{1, \IOUT}$ has appeared in membrane 1 during the previous step, if it was present previously, the rules could have been applied before.
The objects $\mathbf{yes}$ and $\mathbf{no}$ are present in membrane 1 since the beginning of the computation.

After this step, no rules can be applied yet, and the environment contains either $\mathbf{yes}$ or $\mathbf{no}$, according to the value of $x_n$ in the lexicographically middle satisfying assignment.

\subsection{Proof of Theorem \ref{theorem:midsat_solved}}

We prove here that $\MIDSAT \in \PMCCSCk{3}$.
Previous subsection gives us Lemma \ref{lemma:solved}:

\begin{lemma}
\label{lemma:solved}
    When $x_n = 1$ in the lexicographically middle satisfying assignment of $\phi$, then the object $\mathbf{yes}$ is released in the environment.
\end{lemma}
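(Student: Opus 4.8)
The plan is to verify that the informal computation overview of the preceding subsection is faithful by establishing, as invariants maintained step-by-step, the three properties already announced for the comparison sub-phase, together with the correctness of the counters. Concretely, I would prove by induction on the step index that the $\xi$ and $\omega$ counters satisfy the stated occupancy property (after step $s$, exactly $\xi_{i,s}$ and $\omega_s$ are present in membrane $1$, with $\omega$ in $2^n$ copies), since every later phase relies on objects being injected from the environment at precisely the right step by these counters. This reduces to checking that each antiport rule of the form $\antiport{\xi_{i,t}}{\xi_{i,t+1}}$ and $\antiport{\omega_t}{\omega_{t+1}^2}$ fires exactly once per step under maximal parallelism, which follows because each counter object is unique (or duplicated in a controlled way) and appears in no competing rule.

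The core of the argument is the correctness of the search phase. Here I would prove, by an outer induction on $k \in \intinter{1}{n}$, the loop invariant that at the start of preparation sub-phase $k$ the bits $x_1 \cdots x_{k-1}$ of the lexicographically middle satisfying assignment are correctly encoded in membrane $1$ by the presence of $\mu_j$ or $\bar{\mu}_j$. Assuming this, I would show for the fixed value of $k$: (i) the preparation sub-phase produces in $2^n$ copies the reference-assignment objects $\ddot{\epsilon}^N_{k,j}$ encoding $\nu = x_1 \cdots x_{k-1} 1 0 \cdots 0$; (ii) the comparison sub-phase establishes properties (1)--(3), i.e. that each activated membrane $2$ ends with a single object $\epsilon^N_{k,n}$/$\bar{\epsilon}^N_{k,n}$ whose value $\epsilon \in \{l,p,g\}$ records whether its own satisfying assignment is less than, equal to, or greater than $\nu$; and (iii) the determination sub-phase counts these objects and deposits $\mu_k$ exactly when the number of satisfying assignments greater-or-equal to $\nu$ strictly exceeds the number strictly less than $\nu$, and $\bar{\mu}_k$ otherwise. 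The decisive step is (ii): I would prove the bit-by-bit comparison invariant by an inner induction on $i$, checking that the two-step (and final three-step) gadget correctly updates the lexicographic comparison of the first $i$ bits, using the carried exponent $N$ to retrieve the correct bit $N_i$ and the bar/no-bar of $\epsilon$ to record $\nu_i$. Combining (iii) with the binary-search characterisation of the median (as described in the Key Idea paragraph), the invariant for $k+1$ follows: the majority test on the $k$-th bit is exactly the median's $k$-th bit.

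Once the outer induction closes, membrane $1$ contains $\mu_n$ in configuration $\mathcal{C}_{\IOUT}$ precisely when $x_n = 1$ in the median satisfying assignment. The output phase then applies $\sympout{\mu_n\ \mathbf{yes}\ \xi_{1,\IOUT}}$, and since $\xi_{1,\IOUT}$ only appears in $\mathcal{C}_{\IOUT - 1}$, this rule cannot fire earlier; hence $\mathbf{yes}$ is released into the environment exactly at the final step, giving the lemma.

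I expect the main obstacle to be step (ii), the comparison invariant. The difficulty is entirely bookkeeping under the length-at-most-three constraint: because no rule may involve four objects at once, each logical comparison of one bit is spread across two or three steps with auxiliary objects ($\epsilon'_{k,i}$, $\phi_k$, $\kappa$-family), and one must verify that maximal parallelism fires exactly the intended rules in each membrane $2$ without spurious interactions, and that the $\Lambda_i$/$\Psi_i$ objects are correctly shuttled out and back so the assignment bits are preserved. Keeping the $2^n$ copies synchronised across all activated membranes simultaneously, and confirming no object is consumed by a competing rule, is where the proof will be most delicate; the remaining phases are comparatively routine propagation and counting arguments.
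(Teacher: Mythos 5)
Your proposal takes essentially the same route as the paper: the paper's own proof of this lemma is simply an appeal to the computation overview of the preceding subsection, which asserts exactly the invariants you plan to formalize --- the counter occupancy property, the encoding of $x_1\cdots x_{k-1}$ by $\mu_j$/$\bar{\mu}_j$ at the start of each preparation sub-phase, the comparison properties (1)--(3), the strict-majority test of the determination sub-phase, and the observation that $\xi_{1,\IOUT}$ appears only one step before the output rule can fire. Your packaging of this as explicit nested inductions (on the step index, on $k$, and on the bit position $i$) is a faithful formalization of the paper's narrative rather than a genuinely different argument, and you correctly identify the same delicate point (the length-three bookkeeping in the comparison sub-phase) that the paper itself flags as the main source of complexity.
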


\begin{proof}[Proof of Theorem \ref{theorem:midsat_solved}]
    First, the family $\mathbf{\Pi} = (\Pi_{\langle m,n \rangle})$ is $\CSCk{3}$-consistent: for all $n, m \in \NN$, $\Pi_{\langle m,n \rangle} \in \CSCk{3}$.

    Secondly, it is polynomially uniform since the alphabet $\Gamma'$ has a polynomial number of objects, initial multisets have a polynomial cardinal, and there is a polynomial number of rules of constant length.

    Then, $\mathbf{\Pi}$ recognizes $\MIDSAT$ in a uniform way: the functions $s(\phi) = \langle m,n \rangle$ and $cod(\phi) = \{ x_{i_k,j} : l_{i_k,j} = x_{i_k,j} \} \cup \{ \bar{x}_{i_k,j} : l_{i_k,j} = \lnot x_{i_k,j} \}$ are polynomially computable, and our system is confluent: the sets of rules we added to the P system solving $\SAT$ are deterministic.
    Lemma \ref{lemma:solved} gives the last argument.

    Finally, every computation of $\Pi_{\langle m,n \rangle}(cod(\phi))$ halts in $\DeltaSAT + \DeltaACT + \DeltaSEA + \DeltaOUT = 3n^2 + 13n + 2m + 3$ steps, a polynomial in $\langle m, n \rangle$.

    Thus, we have $\MIDSAT \in \PMCCSCk{3}$.
\end{proof}

\section{Implementation in P-Lingua}
\label{section:implementation}

We implemented the family $\mathbf{\Pi}$ solving $\MIDSAT$ in the P system simulator \texttt{MeCoSim} \cite{PLingua}, \cite{PLinguaSA}.
The reader will find it on the \href{https://gitlab.inria.fr/vducros/midsat-csc3}{Inria gitlab instance}.

The authors of \cite{VCAl15} have provided the implementation of their system solving $\SAT$ on the \href{http://www.p-lingua.org/mecosim/doc/case_studies/celllike/satcsc.html}{\texttt{MeCoSim} website}.
We based our implementation on it.

During the computation, some objects stay in the system even if they won't be used anymore.
Then it is possible to add garbage rules to make them exit the system.
This has been done in \cite{VCAl15} and we also added such rules.
After some tests, it appears that the simulation is slower when the garbage rules are added, meaning that it is better to let objects lie in the system in a computer simulation.
This has no influence on the number of steps of the computation.

The results of the simulation are gathered in the Table \ref{table:simulation}.

\begin{table}[h]
    \center
    \begin{tabular}{|C{0.08\linewidth}|C{0.2\linewidth}|C{0.2\linewidth}|C{0.18\linewidth}|C{0.18\linewidth}|}
        \hline
        $\langle n, m \rangle$ & Formula $\phi$ & \small Middle satisfying assignment & \small Simulation time with garbage \newline rules (seconds) & \small Simulation time without garbage rules (seconds) \\\hline
        $\langle 3, 3 \rangle$ & 
        \small
        $(x_1 \lor x_2 \lor \lnot x_3)$ \newline
        $\land (x_1 \lor \lnot x_2 \lor x_3)$ \newline
        $\land (\lnot x_1 \lor x_2 \lor x_3)$ & $(1,0,\mathbf{1})$ & $0.990$ & $0.944$ \\\hline
        $\langle 5, 4 \rangle$ & 
        \small
        $(x_1 \lor x_2 \lor \lnot x_5)$ \newline
        $\land (\lnot x_2 \lor x_3)$ \newline
        $\land (\lnot x_1 \lor x_3 \lor x_4)$ \newline
        $\land (x_1 \lor x_4 \lor \lnot x_5)$ & $(1,0,0,1, \mathbf{1})$ & $16.242$ & $15.458$ \\\hline
        $\langle 6, 5 \rangle$ & 
        \small
        $(x_1 \lor x_2 \lor x_5 \lor x_6)$ \newline
        $\land (\lnot x_2 \lor x_3 \lor x_4)$ \newline
        $\land (\lnot x_1 \lor \lnot x_4 \lor x_6)$ \newline
        $\land (x_1 \lor \lnot x_3 \lor \lnot x_5)$ \newline
        $\land (x_3 \lor \lnot x_6)$ & $(1,0,0,0,1,\mathbf{0})$ & $55.108$ & $53.904$ \\\hline
        $\langle 7, 7 \rangle$ &
        \small
        $(x_1 \lor x_2 \lor x_3 \lor x_4)$ \newline
        $\land (x_5 \lor x_6 \lor x_7)$ \newline
        $\land (\lnot x_1 \lor x_2 \lor x_5)$ \newline
        $\land (x_3 \lor \lnot x_5 \lor \lnot x_6)$ \newline
        $\land (x_3 \lor \lnot x_4 \lor x_7)$ \newline
        $\land (\lnot x_1 \lor \lnot x_7)$ \newline
        $\land (\lnot x_2 \lor x_6)$ & $(0,1,0,0,0,1,\mathbf{1})$ & $145.825$ & $138.009$ \\\hline
    \end{tabular}
    \caption{Simulation time of the P system with/without garbage rules.}
    \label{table:simulation}
\end{table}

\section{Conclusion}
\label{section:conclusion}

In this article, we improved the system solving $\SAT$ from \cite{VCAl15} to get a cell-like symport/antiport P system with membrane separation and rules of length at most 3 solving the $\MIDSAT$ problem.
Because this problem is $\PsharpP$-complete, we obtained the inclusion $\PsharpP \subseteq \PMCCSCk{3}$.

We noticed that these systems are indeed a specific case of tissue P system, already proved to characterize $\PsharpP$ in \cite{LepoAl17}.
This leads us to characterize $\PsharpP$ by the set of all problems solved in polynomial time by a polynomially uniform family of cell-like symport/antiport P systems with membrane separation and rules of length at most 3.

This result implies that the computational powers of cell-like and tissue-like symport/ antiport P systems with membrane separation running in polynomial time are the same.
Moreover, it is known that when cell-like symport/antiport systems use membrane division instead of membrane separation, they can solve problems beyond $\PSPACE$ \cite{Song15}.
Then, our characterization reveals the computational power of membrane division against membrane separation (as long as $\PsharpP \subset \PSPACE$).

\bibliographystyle{plain}
\bibliography{bibliography}

\newpage

\appendix

\section{Complete Definition of the P System}
\label{appendix:definition}

The P system family $\mathbf{\Pi} = (\Pi_{\langle m, n\rangle})_{m, n}$ we designed is defined by:
$${\Pi}_{\langle m, n \rangle} = (\Gamma', (\Gamma' \setminus \Gamma_1, \Gamma_1), \Sigma, \mathcal{E}', [ [\ ]_2 ]_1, \mathcal{M}'_1, \mathcal{M}_2, \mathcal{R}'_1, \mathcal{R}'_2, 1, 0)$$

where the system solving $\SAT$ created by \cite{VCAl15} is:
$$\Pi^{\mathbf{SAT}}_{\langle m, n \rangle} = (\Gamma, (\Gamma \setminus \Gamma_1, \Gamma_1), \Sigma, \mathcal{E}, [ [\ ]_2 [\ ]_3 ]_1, \mathcal{M}_1, \mathcal{M}_2, \mathcal{M}_3, \mathcal{R}_1, \mathcal{R}_2, \mathcal{R}_3, 1, 0)$$

and where the modified parameters are:
\begin{itemize}
    \item $\mathcal{R}'_1 = (\mathcal{R}_1 \setminus D_1) \cup \mathcal{R}''_1$,
    \item $\mathcal{R}'_2 = (\mathcal{R}_2 \setminus D_2) \cup \mathcal{R}''_2$,
    \item $\mathcal{E}' = (\mathcal{E} \setminus D) \cup \mathcal{E}''$,
    \item $\Gamma' = (\Gamma \setminus D) \cup \mathcal{E}'' \cup \{ \xi_{i,0} \}_{i \in \intinter{1}{n}} \cup \{ \omega_0 \}$,
    \item $\mathcal{M}'_1 = \mathcal{M}_1 - \{ f_0 \} - \{ f'_p \}_{p \in \intinter{1}{\DeltaSAT+1}} + \{ \mathbf{no}, \omega_0 \} + \{ \xi_{i,0} \}_{i \in \intinter{1}{n}}$.
\end{itemize}

Membrane 3 has been removed, together with its rules $\mathcal{R}_3$ and initial multiset $\mathcal{M}_3$.

We also removed the following objects and rules:
\begin{itemize}
    \item $D = \{ f_r \}_{r \in \intinter{0}{\DeltaSAT}} \cup \{ f'_p \}_{p \in \intinter{0}{\DeltaSAT+1}} \subseteq \Gamma$,
    \item $D_1 = \{\sympout{E_0 f_{\DeltaSAT} \mathbf{yes}}, \sympout{f_{\DeltaSAT} \mathbf{no}}\} \subseteq \mathcal{R}_1$,
    \item $D_2 = \{\sympout{e_{i,m}E_0}, \sympout{\bar{e}_{i,m}E_0}\}_{i \in \intinter{1}{n}} \subseteq \mathcal{R}_2$.
\end{itemize}

We added the objects of $\mathcal{E}''$ in the environment and the rules of $\mathcal{R}''_1$ and $\mathcal{R}''_2$:

\begin{fleqn}
\begin{align*}
    \mathcal{E}'' = & \ \{ \xi_{i,s} \}_{i \in \intinter{1}{n}, s \in \intinter{1}{\IOUT-1}} \cup \{ \xi_{1,\IOUT} \} \cup \{ \omega_s \}_{s \in \intinter{1}{\IOUT - \DeltaDete}} \\
    & \cup \{ \omega'_s \}_{s \in \intinter{\DeltaSAT-1}{\DeltaSAT + n-2}} \cup \{ \omega'_{\IComp(k) + 2i} \}_{i \in \intinter{0}{n-2}, k \in \intinter{1}{n}} \\
    & \cup \{ \theta_{\IComp(k) + j-1} \}_{j \in \intinter{0}{2n-1}, k \in \intinter{1}{n}} \cup \{ \theta'_{\IComp(k) + 2n -4} \}_{k \in \intinter{1}{n}} \\
    & \cup \{ \sigma \} \cup \{ \sigma'_i \}_{i \in \intinter{0}{n-1}} \cup \{ \Lambda_i, \Psi_i \}_{i \in \intinter{1}{n}} \cup \{ \nu_{i,j} \}_{i \in \intinter{1}{n}, j \in \intinter{0}{n-1}} \\
    & \cup \{ l^\Lambda_{k,i}, g^\Lambda_{k,i}, p^\Lambda_{k,i}, l^\Psi_{k,i}, g^\Psi_{k,i}, p^\Psi_{k,i}, l_{k,i}, g_{k,i}, p_{k,i} \}_{k \in \intinter{1}{n}, i \in \intinter{1}{k}} \\
    & \cup \{ \bar{l}^\Lambda_{k,i}, \bar{g}^\Lambda_{k,i}, \bar{p}^\Lambda_{k,i}, \bar{l}^\Psi_{k,i}, \bar{g}^\Psi_{k,i}, \bar{p}^\Psi_{k,i}, \bar{l}_{k,i}, \bar{g}_{k,i}, \bar{p}_{k,i} \}_{k \in \intinter{1}{n}, i \in \intinter{1}{n}, i \neq k} \\
    & \cup \{ \lambda_{k,i}, \pi_{k,i}, \zeta_{k,i} \}_{k \in \intinter{1}{n}, i \in \intinter{1}{k}} \cup \{ \bar{\lambda}_{k,i}, \bar{\pi}_{k,i}, \bar{\zeta}_{k,i} \}_{k \in \intinter{1}{n}, i \in \intinter{1}{n}, i \neq k} \\
    & \cup \{ l'_{k,i}, g'_{k,i}, p'_{k,i} \}_{k \in \intinter{1}{n}, i \in \intinter{1}{n-1} } \cup \{ \zeta'_{k,i} \}_{k \in \intinter{1}{n}, i \in \intinter{1}{n}} \\
    & \cup \{ \eta_{k,i,j} \}_{k \in \intinter{1}{n}, i \in \intinter{1}{k}, j \in \intinter{0}{n}} \cup \{ \bar{\eta}_{k,i,j} \}_{k \in \intinter{1}{n}, i \in \intinter{1}{n}, i \neq k, j \in \intinter{0}{n}} \\
    & \cup \{ \pi'_k, \phi_k, \phi'_k, \phi''_k \}_{k \in \intinter{1}{n}} \cup \{ \kappa_k, \kappa'_k, \kappa''_k, \kappa'''_k \}_{k \in \intinter{1}{n}} \\
    & \cup \{ \chi_k, \mu_k, \bar{\mu}_k, L_k, G_k \}_{k \in \intinter{1}{n}}
\end{align*}
\end{fleqn}

\begin{fleqn}
\begin{align*}
    & \begin{rcases}
        \mathcal{R}''_1 = & \ \{ \antiport{\xi_{i,s}}{\xi_{i,s+1}} \}_{i \in \intinter{1}{n},s \in \intinter{0}{\DeltaSAT - 1}} \\
        \phantom{\mathcal{R}''_1 =}
        & \cup \{ \antiport{\xi_{i,\DeltaSAT}}{\xi_{i,\DeltaSAT + 1},\nu_{i,0}} \}_{i \in \intinter{1}{n}} \\
        & \cup \{ \antiport{\xi_{i,s}}{\xi_{i,s+1}} \}_{i \in \intinter{1}{n},s \in \intinter{\DeltaSAT + 1}{\IPrep(1) - 2}} \\
        & \cup \{ \antiport{\xi_{i,\IPrep(k) - 1}}{\xi_{i,\IPrep(k)},\zeta'_{k,i}} \}_{i \in \intinter{1}{n},k \in \intinter{1}{n}} \\
        & \cup \{ \antiport{\xi_{i,\IPrep(k) + j}}{\xi_{i,\IPrep(k) + j + 1}} \}_{i \in \intinter{1}{n},k \in \intinter{1}{n}}^{j \in \intinter{0}{\DeltaLoop - 3}} \\
        & \cup \{ \antiport{\xi_{1,\IPrep(k+1) - 2}}{\xi_{1,\IPrep(k+1) - 1},\bar{\mu}_{k}} \}_{k \in \intinter{1}{n}} \\
        & \cup \{ \antiport{\xi_{i,\IPrep(k+1) - 2}}{\xi_{i,\IPrep(k+1) - 1}} \}_{i \in \intinter{2}{n},k \in \intinter{1}{n}} \\
        & \cup \{ \antiport{\xi_{1,\IOUT - 1}}{\xi_{1,\IOUT}} \}
    \end{rcases}
    {\begin{array}{l}
        \text{Rules for} \\
        \text{counters $\xi_i$.}
    \end{array}} \\
    & \begin{rcases}
        \phantom{\mathcal{R}''_1 =}
        & \cup \{ \antiport{\omega_{s}}{\omega_{s+1}^2} \}_{s \in \intinter{0}{n-1}} \\
        & \cup \{ \antiport{\omega_{s}}{\omega_{s+1}} \}_{s \in \intinter{n}{\DeltaSAT - 3}} \\
        & \cup \{ \antiport{\omega_{\DeltaSAT - 2}}{\omega_{\DeltaSAT - 1}, \omega'_{\DeltaSAT - 1}} \} \\
        & \cup \{ \antiport{\omega_{\DeltaSAT - 1}}{\omega_{\DeltaSAT}} \} \\
        & \cup \{ \antiport{\omega_{s}}{\omega_{s+1}, \sigma} \}_{s \in \intinter{\DeltaSAT}{\DeltaSAT + n - 1}} \\
        & \cup \{ \antiport{\omega_{s}}{\omega_{s+1}} \}_{s \in \intinter{\DeltaSAT + n}{\IComp(1) - 3}}
    \end{rcases}
    {\begin{array}{l}
        \text{Rules for counter $\omega$} \\
        \text{until first comparison} \\
        \text{sub-phase.}
    \end{array}} \\
    & \begin{rcases}
        \phantom{\mathcal{R}''_1 =}
        & \cup \{ \antiport{\omega_{\IComp(k) - 2}}{\omega_{\IComp(k) - 1}, \theta_{\IComp(k) - 1}} \}_{k \in \intinter{1}{n}} \\
        & \cup \{ \antiportbr{\omega_{\IComp(k) + 2i - 1}}{\omega_{\IComp(k) + 2i}, \omega'_{\IComp(k) + 2i}} \}_{i \in \intinter{0}{n-2}, k \in \intinter{1}{n}} \\
        & \cup \{ \antiportbr{\omega_{\IComp(k) + 2i}}{\omega_{\IComp(k) + 2i + 1}, p'_{k, i+1}} \}_{i \in \intinter{0}{n-2}, k \in \intinter{1}{n}} \\
        & \cup \{ \antiport{\omega_{\IComp(k) + 2n - 3}}{\omega_{\IComp(k) + 2n - 2}} \}_{k \in \intinter{1}{n}} \\
        & \cup \{ \antiport{\omega_{\IComp(k) + 2n - 2}}{\omega_{\IComp(k) + 2n - 1}} \}_{k \in \intinter{1}{n}} \\
        & \cup \{ \antiport{\omega_{\IComp(k) + 2n - 1}}{\omega_{\IComp(k) + 2n}, \kappa'''_{k}} \}_{k \in \intinter{1}{n}}
    \end{rcases}
    {\begin{array}{l}
        \text{Rules for counter} \\
        \text{$\omega$, $k$-th comparison} \\
        \text{sub-phase.}
    \end{array}} \\
    & \begin{rcases}
        \phantom{\mathcal{R}''_1 =}
        & \cup \{ \antiport{\omega_{\IDete(k) - 1}}{\omega_{\IDete(k)}, \chi_{k}} \}_{k \in \intinter{1}{n}} \\
        & \cup \{ \antiportbr{\omega_{\IDete(k) + j}}{\omega_{\IDete(k) + j+1}} \}_{k \in \intinter{1}{n-1}}^{j \in \intinter{0}{\DeltaDete + \DeltaPrep - 3}}
    \end{rcases}
    {\begin{array}{l}
        \text{Rules for counter $\omega$,} \\
        \text{$k$-th determination and} \\
        \text{$k$+1-th preparation} \\
        \text{sub-phases.}
    \end{array}} \\
    & \begin{rcases}
        \phantom{\mathcal{R}''_1 =}
        & \cup \{ \antiport{\omega'_{\DeltaSAT + s-1}}{\omega'_{\DeltaSAT + s}, \sigma'_{s}} \}_{s \in \intinter{0}{n-2}} \\
        & \cup \{ \antiport{\omega'_{\DeltaSAT + n-2}}{\sigma'_{n-1}} \} \\
        & \cup \{ \antiport{\nu_{i, j}}{\nu_{i, j+1}^2} \}_{i \in \intinter{1}{n}, j \in \intinter{0}{n-2}} \\
        & \cup \{ \antiport{\nu_{i, n-1}}{\Lambda_{i}, \Psi_{i}} \}_{i \in \intinter{1}{n}}
    \end{rcases}
    {\begin{array}{l}
        \text{Rules for the} \\
        \text{activation phase.}
    \end{array}} \\
    & \begin{rcases}
        \phantom{\mathcal{R}''_1 =}
        & \cup \{ \antiport{\zeta'_{k, i}, \mu_{i}}{\zeta_{k, i}} \}_{i \in \intinter{1}{k-1}, k \in \intinter{1}{n}} \\
        & \cup \{ \antiport{\zeta'_{k, i}, \bar{\mu}_{i}}{\bar{\zeta}_{k, i}} \}_{i \in \intinter{1}{k-1}, k \in \intinter{1}{n}} \\
        & \cup \{ \antiport{\zeta'_{k, k}}{\zeta_{k, k}} \}_{k \in \intinter{1}{n}} \\
        & \cup \{ \antiport{\zeta'_{k, i}}{\bar{\zeta}_{k, i}} \}_{i \in \intinter{k+1}{n}, k \in \intinter{1}{n}} \\
        & \cup \{ \antiport{\zeta_{k, i}}{\eta_{k, i, 0}, \mu_{i}} \}_{i \in \intinter{1}{k-1}, k \in \intinter{1}{n}} \\
        & \cup \{ \antiport{\bar{\zeta}_{k, i}}{\bar{\eta}_{k, i, 0}, \bar{\mu}_{i}} \}_{i \in \intinter{1}{k-1}, k \in \intinter{1}{n}} \\
        & \cup \{ \antiport{\zeta_{k, k}}{\eta_{k, k, 0}} \}_{k \in \intinter{1}{n}} \\
        & \cup \{ \antiport{\bar{\zeta}_{k, i}}{\bar{\eta}_{k, i, 0}} \}_{i \in \intinter{k+1}{n}, k \in \intinter{1}{n}} \\
        & \cup \{ \antiport{\eta_{k, i, j}}{\eta_{k, i, j+1}^2} \}_{i \in \intinter{1}{k}, k \in \intinter{1}{n}, j \in \intinter{0}{n-1}} \\
        & \cup \{ \antiport{\bar{\eta}_{k, i, j}}{\bar{\eta}_{k, i, j+1}^2} \}_{k \in \intinter{1}{n}, i \in \intinter{1}{n}, j \in \intinter{0}{n-1}, i \neq k} \\
        & \cup \{ \antiport{\eta_{k, i, n}}{\lambda_{k, i}, \pi_{k, i}} \}_{i \in \intinter{1}{k}, k \in \intinter{1}{n}} \\
        & \cup \{ \antiport{\bar{\eta}_{k, i, n}}{\bar{\lambda}_{k, i}, \bar{\pi}_{k, i}} \}_{k \in \intinter{1}{n}, i \in \intinter{1}{n}, i \neq k}
    \end{rcases}
    {\begin{array}{l}
        \text{Rules for the} \\
        \text{preparation} \\
        \text{sub-phase,} \\
        \text{part I.}
    \end{array}} \\
    & \begin{rcases}
        \phantom{\mathcal{R}''_1 =}
        & \cup \{ \antiport{\lambda_{k, i}}{l_{k, i}, g_{k, i}} \}_{i \in \intinter{1}{k}, k \in \intinter{1}{n}} \\
        & \cup \{ \antiport{\bar{\lambda}_{k, i}}{\bar{l}_{k, i}, \bar{g}_{k, i}} \}_{k \in \intinter{1}{n}, i \in \intinter{1}{n}, i \neq k} \\
        & \cup \{ \antiport{\pi_{k, i}}{p_{k, i}, \pi'_{k}} \}_{i \in \intinter{1}{k}, k \in \intinter{1}{n}} \\
        & \cup \{ \antiport{\bar{\pi}_{k, i}}{\bar{p}_{k, i}, \pi'_{k}} \}_{k \in \intinter{1}{n}, i \in \intinter{1}{n}, i \neq k} \\
        & \cup \{ \antiport{l_{k, i}}{l^{\Lambda}_{k, i}, l^{\Psi}_{k, i}} \}_{i \in \intinter{1}{k}, k \in \intinter{1}{n}} \\
        & \cup \{ \antiport{g_{k, i}}{g^{\Lambda}_{k, i}, g^{\Psi}_{k, i}} \}_{i \in \intinter{1}{k}, k \in \intinter{1}{n}} \\
        & \cup \{ \antiport{p_{k, i}}{p^{\Lambda}_{k, i}, p^{\Psi}_{k, i}} \}_{i \in \intinter{1}{k}, k \in \intinter{1}{n}} \\
        & \cup \{ \antiport{\bar{l}_{k, i}}{\bar{l}^{\Lambda}_{k, i}, \bar{l}^{\Psi}_{k, i}} \}_{k \in \intinter{1}{n}, i \in \intinter{1}{n}, i \neq k} \\
        & \cup \{ \antiport{\bar{g}_{k, i}}{\bar{g}^{\Lambda}_{k, i}, \bar{g}^{\Psi}_{k, i}} \}_{k \in \intinter{1}{n}, i \in \intinter{1}{n}, i \neq k} \\
        & \cup \{ \antiport{\bar{p}_{k, i}}{\bar{p}^{\Lambda}_{k, i}, \bar{p}^{\Psi}_{k, i}} \}_{k \in \intinter{1}{n}, i \in \intinter{1}{n}, i \neq k} \\
        & \cup \{ \antiport{\pi'_{k}}{\phi_{k}, \phi'_{k}} \}_{k \in \intinter{1}{n}}
    \end{rcases}
    {\begin{array}{l}
        \text{Rules for the} \\
        \text{preparation} \\
        \text{sub-phase,} \\
        \text{part II.}
    \end{array}} \\
    & \begin{rcases}
        \phantom{\mathcal{R}''_1 =}
        & \cup \{ \antiport{\omega'_{\IComp(k) + 2i}}{l'_{k, i+1}, g'_{k, i+1}} \}_{i \in \intinter{0}{n-2}, k \in \intinter{1}{n}} \\
        & \cup \{ \antiport{\phi'_{k}}{\phi''_{k}} \}_{k \in \intinter{1}{n}} \\
        & \cup \{ \sympout{\phi_{k}, \phi''_{k}} \}_{k \in \intinter{1}{n}} \\
        & \cup \{ \antiport{\theta_{\IComp(k) + j-1}}{\theta_{\IComp(k) + j}} \}_{k \in \intinter{1}{n}, j \in \intinter{0}{2n - 5}} \\
        & \cup \{ \antiportbr{\theta_{\IComp(k) + 2n - 5}}{\theta_{\IComp(k) + 2n - 4}, \theta'_{\IComp(k) + 2n - 4}} \}_{k \in \intinter{1}{n}} \\
        & \cup \{ \antiport{\theta_{\IComp(k) + 2n - 4}}{\theta_{\IComp(k) + 2n - 3}^2} \}_{k \in \intinter{1}{n}} \\
        & \cup \{ \antiport{\theta'_{\IComp(k) + 2n - 4}}{\theta_{\IComp(k) + 2n - 3}} \}_{k \in \intinter{1}{n}} \\
        & \cup \{ \antiport{\theta_{\IComp(k) + 2n - 3}}{\theta_{\IComp(k) + 2n - 2}^2} \}_{k \in \intinter{1}{n}} \\
        & \cup \{ \antiport{\theta_{\IComp(k) + 2n - 2}}{\kappa_{k}, \kappa'_{k}} \}_{k \in \intinter{1}{n}}
    \end{rcases}
    {\begin{array}{l}
        \text{Rules for the} \\
        \text{comparison} \\
        \text{sub-phase,} \\
        \text{part I.}
    \end{array}} \\
    & \begin{rcases}
        \phantom{\mathcal{R}''_1 =}
        & \cup \{ \antiport{\kappa'_{k}}{\kappa''_{k}} \}_{k \in \intinter{1}{n}} \\
        & \cup \{ \sympout{\kappa_{n}, l^{\Lambda}_{n, n}}, \sympout{\kappa_{n}, l^{\Psi}_{n, n}} \} \\
        & \cup \{ \sympout{\kappa_{n}, g^{\Lambda}_{n, n}}, \sympout{\kappa_{n}, g^{\Psi}_{n, n}} \} \\
        & \cup \{ \sympout{\kappa_{n}, p^{\Lambda}_{n, n}}, \sympout{\kappa_{n}, p^{\Psi}_{n, n}} \} \\
        & \cup \{ \sympout{\kappa_{k}, \bar{l}^{\Lambda}_{k, n}}, \sympout{\kappa_{k}, \bar{l}^{\Psi}_{k, n}} \}_{k \in \intinter{1}{n-1}} \\
        & \cup \{ \sympout{\kappa_{k}, \bar{g}^{\Lambda}_{k, n}}, \sympout{\kappa_{k}, \bar{g}^{\Psi}_{k, n}} \}_{k \in \intinter{1}{n-1}} \\
        & \cup \{ \sympout{\kappa_{k}, \bar{p}^{\Lambda}_{k, n}}, \sympout{\kappa_{k}, \bar{p}^{\Psi}_{k, n}} \}_{k \in \intinter{1}{n-1}} \\
        & \cup \{ \sympout{\kappa_{k}, \kappa''_{k}} \}_{k \in \intinter{1}{n}}
    \end{rcases}
    {\begin{array}{l}
        \text{Rules for the} \\
        \text{comparison} \\
        \text{sub-phase,} \\
        \text{part II.}
    \end{array}} \\
    & \begin{rcases}
        \phantom{\mathcal{R}''_1 =}
        & \cup \{ \antiport{l^{\Lambda}_{n, n}, \chi_{n}}{L_{n}}, \antiport{l^{\Psi}_{n, n}, \chi_{n}}{L_{n}} \} \\
        & \cup \{ \antiport{g^{\Lambda}_{n, n}, \chi_{n}}{G_{n}}, \antiport{g^{\Psi}_{n, n}, \chi_{n}}{G_{n}} \} \\
        & \cup \{ \antiport{p^{\Lambda}_{n, n}, \chi_{n}}{G_{n}}, \antiport{p^{\Psi}_{n, n}, \chi_{n}}{G_{n}} \} \\
        & \cup \{ \antiport{\bar{l}^{\Lambda}_{k, n}, \chi_{k}}{L_{k}}, \antiport{\bar{l}^{\Psi}_{k, n}, \chi_{k}}{L_{k}} \}_{k \in \intinter{1}{n-1}} \\
        & \cup \{ \antiport{\bar{g}^{\Lambda}_{k, n}, \chi_{k}}{G_{k}}, \antiport{\bar{g}^{\Psi}_{k, n}, \chi_{k}}{G_{k}} \}_{k \in \intinter{1}{n-1}} \\
        & \cup \{ \antiport{\bar{p}^{\Lambda}_{k, n}, \chi_{k}}{G_{k}}, \antiport{\bar{p}^{\Psi}_{k, n}, \chi_{k}}{G_{k}} \}_{k \in \intinter{1}{n-1}} \\
        & \cup \{ \sympout{G_{k}, L_{k}} \}_{k \in \intinter{1}{n}} \\
        & \cup \{ \antiport{G_{k}, \bar{\mu}_{k}}{\mu_{k}} \}_{k \in \intinter{1}{n}}
    \end{rcases}
    {\begin{array}{l}
        \text{Rules for the} \\
        \text{determination} \\
        \text{sub-phase.}
    \end{array}} \\
    & \begin{rcases}
        \phantom{\mathcal{R}''_1 =}
        & \cup \{ \sympout{\mu_{n}, \mathbf{yes}, \xi_{1, \IOUT}} \} \\
        & \cup \{ \sympout{\bar{\mu}_{n}, \mathbf{no}, \xi_{1, \IOUT}} \}
    \end{rcases}
    {\begin{array}{l}
        \text{Rules for the output phase.}
    \end{array}}  \\
\end{align*}
\end{fleqn}

\begin{fleqn}
\begin{align*}
    & \begin{rcases}
        \mathcal{R}''_2 = & \ \{ \antiport{e_{i, m}}{\sigma'_{0}, T_{i}} \}_{i \in \intinter{1}{n}} \\
        \phantom{\mathcal{R}''_2 =}
        & \cup \{ \antiport{\bar{e}_{i, m}}{\sigma'_{0}, F_{i}} \}_{i \in \intinter{1}{n}} \\
        & \cup \{ \antiport{\sigma'_{i}}{\sigma'_{i+1}, \sigma} \}_{i \in \intinter{0}{n-2}} \\
        & \cup \{ \antiport{\sigma'_{n-1}}{\sigma} \} \\
        & \cup \{ \antiport{\sigma, T_{i}}{\Lambda_{i}} \}_{i \in \intinter{1}{n}} \\
        & \cup \{ \antiport{\sigma, F_{i}}{\Psi_{i}} \}_{i \in \intinter{1}{n}}
    \end{rcases}\text{Rules for the activation phase.} \\
    & \begin{rcases}
        \phantom{\mathcal{R}''_2 =}
        & \cup \{ \antiport{\Lambda_{1}}{p^{\Lambda}_{k, 1}, \phi_{k}} \}_{k \in \intinter{1}{n}} \\
        & \cup \{ \antiport{\Psi_{1}}{l^{\Psi}_{k, 1}, \phi_{k}} \}_{k \in \intinter{1}{n}} \\
        & \cup \{ \antiport{\Lambda_{1}}{\bar{g}^{\Lambda}_{k, 1}, \phi_{k}} \}_{k \in \intinter{2}{n}} \\
        & \cup \{ \antiport{\Psi_{1}}{\bar{p}^{\Psi}_{k, 1}, \phi_{k}} \}_{k \in \intinter{2}{n}}
    \end{rcases}
    {\begin{array}{l}
        \text{Rules for the comparison sub-phase,} \\
        \text{applied at step $\IComp(k) + 1$.}
    \end{array}} \\
    & \begin{rcases}
        \phantom{\mathcal{R}''_2 =}
        & \cup \{ \antiport{l'_{k, i-1}, \Lambda_{i}}{l^{\Lambda}_{k, i}} \}_{i \in \intinter{2}{k}, k \in \intinter{1}{n}} \\
        & \cup \{ \antiport{l'_{k, i-1}, \Psi_{i}}{l^{\Psi}_{k, i}} \}_{i \in \intinter{2}{k}, k \in \intinter{1}{n}} \\
        & \cup \{ \antiport{l'_{k, i-1}, \Lambda_{i}}{\bar{l}^{\Lambda}_{k, i}} \}_{k \in \intinter{1}{n}, i \in \intinter{2}{n}, i \neq k} \\
        & \cup \{ \antiport{l'_{k, i-1}, \Psi_{i}}{\bar{l}^{\Psi}_{k, i}} \}_{k \in \intinter{1}{n}, i \in \intinter{2}{n}, i \neq k} \\
        & \cup \{ \antiport{g'_{k, i-1}, \Lambda_{i}}{g^{\Lambda}_{k, i}} \}_{i \in \intinter{2}{k}, k \in \intinter{1}{n}} \\
        & \cup \{ \antiport{g'_{k, i-1}, \Psi_{i}}{g^{\Psi}_{k, i}} \}_{i \in \intinter{2}{k}, k \in \intinter{1}{n}} \\
        & \cup \{ \antiport{g'_{k, i-1}, \Lambda_{i}}{\bar{g}^{\Lambda}_{k, i}} \}_{k \in \intinter{1}{n}, i \in \intinter{2}{n}, i \neq k} \\
        & \cup \{ \antiport{g'_{k, i-1}, \Psi_{i}}{\bar{g}^{\Psi}_{k, i}} \}_{k \in \intinter{1}{n}, i \in \intinter{2}{n}, i \neq k} \\
        & \cup \{ \antiport{p'_{k, i-1}, \Lambda_{i}}{p^{\Lambda}_{k, i}} \}_{i \in \intinter{2}{k}, k \in \intinter{1}{n}} \\
        & \cup \{ \antiport{p'_{k, i-1}, \Psi_{i}}{l^{\Psi}_{k, i}} \}_{i \in \intinter{2}{k}, k \in \intinter{1}{n}} \\
        & \cup \{ \antiport{p'_{k, i-1}, \Lambda_{i}}{\bar{g}^{\Lambda}_{k, i}} \}_{k \in \intinter{1}{n}, i \in \intinter{2}{n}, i \neq k} \\
        & \cup \{ \antiport{p'_{k, i-1}, \Psi_{i}}{\bar{p}^{\Psi}_{k, i}} \}_{k \in \intinter{1}{n}, i \in \intinter{2}{n}, i \neq k}
    \end{rcases}{\begin{array}{l}
        \text{Rules for the} \\
        \text{comparison sub-phase,} \\
        \text{applied at steps} \\
        \text{$\IComp(k) + 2i - 1$.}
    \end{array}} \\
    & \begin{rcases}
        \phantom{\mathcal{R}''_2 =}
        & \cup \{ \antiport{l^{\Lambda}_{k, i}}{\Lambda_{i}, l'_{k, i}} \}_{i \in \intinter{1}{k}, k \in \intinter{1}{n-1}} \\
        & \cup \{ \antiport{l^{\Lambda}_{n, i}}{\Lambda_{i}, l'_{n, i}} \}_{i \in \intinter{1}{n-1}} \\
        & \cup \{ \antiport{l^{\Psi}_{k, i}}{\Psi_{i}, l'_{k, i}} \}_{i \in \intinter{1}{k}, k \in \intinter{1}{n-1}} \\
        & \cup \{ \antiport{l^{\Psi}_{n, i}}{\Psi_{i}, l'_{n, i}} \}_{i \in \intinter{1}{n-1}} \\
        & \cup \{ \antiport{g^{\Lambda}_{k, i}}{\Lambda_{i}, g'_{k, i}} \}_{i \in \intinter{1}{k}, k \in \intinter{1}{n-1}} \\
        & \cup \{ \antiport{g^{\Lambda}_{n, i}}{\Lambda_{i}, g'_{n, i}} \}_{i \in \intinter{1}{n-1}} \\
        & \cup \{ \antiport{g^{\Psi}_{k, i}}{\Psi_{i}, g'_{k, i}} \}_{i \in \intinter{1}{k}, k \in \intinter{1}{n-1}} \\
        & \cup \{ \antiport{g^{\Psi}_{n, i}}{\Psi_{i}, g'_{n, i}} \}_{i \in \intinter{1}{n-1}} \\
        & \cup \{ \antiport{p^{\Lambda}_{k, i}}{\Lambda_{i}, p'_{k, i}} \}_{i \in \intinter{1}{k}, k \in \intinter{1}{n-1}} \\
        & \cup \{ \antiport{p^{\Lambda}_{n, i}}{\Lambda_{i}, p'_{n, i}} \}_{i \in \intinter{1}{n-1}} \\
        & \cup \{ \antiport{p^{\Psi}_{k, i}}{\Psi_{i}, p'_{k, i}} \}_{i \in \intinter{1}{k}, k \in \intinter{1}{n-1}} \\
        & \cup \{ \antiport{p^{\Psi}_{n, i}}{\Psi_{i}, p'_{n, i}} \}_{i \in \intinter{1}{n-1}} \\
        & \cup \{ \antiport{\bar{l}^{\Lambda}_{k, i}}{\Lambda_{i}, l'_{k, i}} \}_{k \in \intinter{1}{n}, i \in \intinter{1}{n-1}, i \neq k} \\
        & \cup \{ \antiport{\bar{l}^{\Psi}_{k, i}}{\Psi_{i}, l'_{k, i}} \}_{k \in \intinter{1}{n}, i \in \intinter{1}{n-1}, i \neq k} \\
        & \cup \{ \antiport{\bar{g}^{\Lambda}_{k, i}}{\Lambda_{i}, g'_{k, i}} \}_{k \in \intinter{1}{n}, i \in \intinter{1}{n-1}, i \neq k} \\
        & \cup \{ \antiport{\bar{g}^{\Psi}_{k, i}}{\Psi_{i}, g'_{k, i}} \}_{k \in \intinter{1}{n}, i \in \intinter{1}{n-1}, i \neq k} \\
        & \cup \{ \antiport{\bar{p}^{\Lambda}_{k, i}}{\Lambda_{i}, p'_{k, i}} \}_{k \in \intinter{1}{n}, i \in \intinter{1}{n-1}, i \neq k} \\
        & \cup \{ \antiport{\bar{p}^{\Psi}_{k, i}}{\Psi_{i}, p'_{k, i}} \}_{k \in \intinter{1}{n}, i \in \intinter{1}{n-1}, i \neq k}
    \end{rcases}{\begin{array}{l}
        \text{Rules for the} \\
        \text{comparison sub-phase,} \\
        \text{applied at steps} \\
        \text{$\IComp(k) + 2i$.}
    \end{array}} \\
    & \begin{rcases}
        \phantom{\mathcal{R}''_2 =}
        & \cup \{ \antiport{l^{\Lambda}_{n, n}}{\Lambda_{n}, \kappa'''_{n}} \} \\
        & \cup \{ \antiport{l^{\Psi}_{n, n}}{\Psi_{n}, \kappa'''_{n}} \} \\
        & \cup \{ \antiport{g^{\Lambda}_{n, n}}{\Lambda_{n}, \kappa'''_{n}} \} \\
        & \cup \{ \antiport{g^{\Psi}_{n, n}}{\Psi_{n}, \kappa'''_{n}} \} \\
        & \cup \{ \antiport{p^{\Lambda}_{n, n}}{\Lambda_{n}, \kappa'''_{n}} \} \\
        & \cup \{ \antiport{p^{\Psi}_{n, n}}{\Psi_{n}, \kappa'''_{n}} \} \\
        & \cup \{ \antiport{\bar{l}^{\Lambda}_{k, n}}{\Lambda_{n}, \kappa'''_{k}} \}_{k \in \intinter{1}{n-1}} \\
        & \cup \{ \antiport{\bar{l}^{\Psi}_{k, n}}{\Psi_{n}, \kappa'''_{k}} \}_{k \in \intinter{1}{n-1}} \\
        & \cup \{ \antiport{\bar{g}^{\Lambda}_{k, n}}{\Lambda_{n}, \kappa'''_{k}} \}_{k \in \intinter{1}{n-1}} \\
        & \cup \{ \antiport{\bar{g}^{\Psi}_{k, n}}{\Psi_{n}, \kappa'''_{k}} \}_{k \in \intinter{1}{n-1}} \\
        & \cup \{ \antiport{\bar{p}^{\Lambda}_{k, n}}{\Lambda_{n}, \kappa'''_{k}} \}_{k \in \intinter{1}{n-1}} \\
        & \cup \{ \antiport{\bar{p}^{\Psi}_{k, n}}{\Psi_{n}, \kappa'''_{k}} \}_{k \in \intinter{1}{n-1}}
    \end{rcases}{\begin{array}{l}
        \text{Rules for the} \\
        \text{comparison sub-phase,} \\
        \text{applied at steps} \\
        \text{$\IComp(k) + 2n + 1$.}
    \end{array}}
\end{align*}
\end{fleqn}

\end{document}